\theoremstyle{definition}
\newtheorem{theorem}{Theorem}
\newtheorem{corollary}[theorem]{Corollary}
\newtheorem{proposition}[theorem]{Proposition}
\newtheorem{lemma}[theorem]{Lemma}
\newtheorem{definition}[theorem]{Definition}
\newtheorem{example}[theorem]{Example}
\newtheorem{notation}[theorem]{Notation}
\newtheorem{remark}[theorem]{Remark}
\newcommand{\numberset}{\mathbb}
\newcommand{\N}{\numberset{N}}
\newcommand{\Z}{\numberset{Z}}
\newcommand{\K}{\numberset{K}}
\newcommand{\F}{\numberset{F}}
\newcommand{\mS}{\mathcal{S}}
\newcommand{\mC}{\mathcal{C}}
\newcommand{\mD}{\mathcal{D}}
\newcommand{\mG}{\mathcal{G}}
\newcommand{\mA}{\mathcal{A}}
\newcommand{\mU}{\mathcal{U}}
\newcommand{\mF}{\mathcal{F}}
\title{\textbf{Generalized weights:  an anticode approach}}
\author{Alberto Ravagnani%
  \thanks{E-mail: \texttt{alberto.ravagnani@unine.ch}. \ The author was
partially supported by the Swiss National Science
Foundation through grant no. 200021\_150207.}}
\affil{\textnormal{Institut de Math\'{e}matiques} \\ \textnormal{Universit\'{e}
de Neuch\^{a}tel}\\\textnormal{Emile-Argand 11, CH-2000 Neuch\^{a}tel,
Switzerland}}
\date{}
\begin{document}

\maketitle

\begin{abstract}
In this paper we study generalized weights as an algebraic
invariant of a code. We first describe anticodes in the Hamming and in the rank
metric,
proving in particular that optimal anticodes in the rank metric coincide with 
Frobenius-closed spaces. Then we characterize both generalized Hamming and rank
weights of a code in terms of the intersection of the
code with optimal anticodes in the respective metrics. Inspired by this
description, we propose a new algebraic invariant, which we call
``Delsarte generalized weights'', for
Delsarte rank-metric codes  based on optimal anticodes of matrices.
We show that our invariant refines
the generalized rank weights for Gabidulin codes proposed by
Kurihara, Matsumoto and Uyematsu, and establish a series 
of properties of Delsarte generalized weights. 
In particular, we characterize Delsarte optimal codes and anticodes
in terms of their generalized weights.
We also present a duality theory
for the new  algebraic invariant, proving that 
the Delsarte generalized weights of a code completely determine the Delsarte
generalized weights of the dual code.
Our results extend the theory of generalized rank weights for Gabidulin
codes. Finally, we
prove the analogue for Gabidulin codes of a theorem of Wei, proving that
their generalized rank weights characterize the
worst-case
security drops of a Gabidulin rank-metric code.
\end{abstract}

\maketitle

\section*{Introduction}\label{intr0}
Linear codes with the Hamming metric can be employed in wiretap channels to secure a communication against an
eavesdropper (see e.g. \cite{ozarow}). In \cite{wei}, Wei proved that 
in this context the
performance  of a code  
is measured by an algebraic invariant of the code, namely, the
collection
of its generalized Hamming weights.
Generalized Hamming weights have interesting mathematical properties.
For
example, they generalize the notion of minimum distance, 
and form a strictly increasing sequence
of integers. Another interesting combinatorial feature
is that the generalized Hamming weights
of a linear code completely determine the generalized Hamming weights of
the dual code. The generalized
Hamming weights of a code are defined in terms of the supports of the
subcodes of the code of given dimension.

Recently, Silva and Kschischang proposed a scheme based on Gabidulin
rank-metric codes
to secure a communication against an eavesdropper over a network in a universal
way (see \cite{metrics3} for details).
An important feature of the scheme is that it
is compatible with linear network coding.
Generalized  rank weights were introduced by Kurihara, Matsumoto
and Uyematsu in \cite{kmat} to
measure the performance of a Gabidulin code when employed in the scheme of \cite{metrics3}.
The generalized rank weights of a Gabidulin code are defined in terms of the
intersections of the code with Frobenius-closed spaces.
Generalized rank weights also have interesting mathematical properties,
including a duality theory (see \cite{kmat} and \cite{ducoat}).

In \cite{del1} Delsarte defines rank-metric codes as linear spaces of matrices
of given size over a finite field. There exists a natural way to
associate to a Gabidulin code
 a Delsarte code  with the same metric properties.
Thus Delsarte codes may be regarded as a generalization of Gabidulin
codes. 
It is not clear however how to extend the definition of generalized rank weights
for Gabidulin codes to Delsarte codes in a
convenient way, i.e., producing a well-behaving algebraic invariant. 
This is the main problem that we address in our work.

Gabidulin and Delsarte codes have interesting applications in coherent and
non-coherent
 network coding, e.g., they play an important role in the
construction of
subspace codes for random network coding in the approach of
\cite{KK1}. We address the interested reader to \cite{metrics}.

In this paper we focus on generalized weights for linear, Gabidulin and
Delsarte codes from an algebraic point of view. We first
investigate optimal anticodes
in the Hamming and in the rank metric, and show that both the generalized
Hamming weights and the 
generalized rank weights of a code can be characterized in terms of the
intersection of
the code with optimal anticodes in the respective metrics. In order
to establish this characterization for generalized rank weights, we prove in
particular that 
Frobenius-closed spaces in $\F_{q^m}^k$ coincide with optimal
anticodes
in the rank metric. The result says that the algebraic condition 
of being Frobenius-closed may be regarded as a metric condition. We
also give a convenient method to
compute a basis defined over $\F_q$ of a Frobenius-closed space $V \subseteq
\F_{q^m}^k$.

Inspired by the characterizations above, we propose a  definition of
generalized weights for Delsarte rank-metric codes
based on optimal anticodes in the space of matrices. Then we prove that Delsarte
generalized weights, as an algebraic invariant,
refine generalized rank weights for Gabidulin codes. We establish several properties of Delsarte generalized weights, which 
may be regarded as the analogue for Delsarte codes of the classical properties
of generalized Hamming and rank weights.
In particular, we show that Delsarte optimal codes and anticodes are
characterized by their Delsarte generalized weights.
We also study how Delsarte generalized weights relate to the duality theory of Delsarte codes. In particular, we
prove that the Delsarte generalized weights of a code  determine the Delsarte generalized weights of the dual code.

Finally, we show that the generalized rank weights 
proposed by Kurihara, Matsumoto and Uyematsu in \cite{kmat} measure the
worst-case security drops
of a Gabidulin code employed in the scheme of \cite{metrics3}.

The paper is organized as follows. In Section \ref{prel} we give preliminary definitions and results
on linear and rank-metric codes. In Section \ref{sss1} we characterize
generalized Hamming weights
in terms of optimal anticodes in the Hamming metric. In Section \ref{sss2} we prove that
Frobenius-closed spaces coincide with optimal anticodes in the rank metric, and characterize
generalized rank weights in terms of optimal anticodes. In Section \ref{sss4}
we 
introduce Delsarte codes and define Delsarte generalized weights,
proving that they refine generalized rank weights for Gabidulin codes. The main
properties
of Delsarte generalized weights are derived in Section \ref{sss5}. In
Section \ref{sss6}
we focus on the duality theory of Delsarte codes, showing that the
generalized weights of a Delsarte code
determine the generalized weights of the dual code. We prove the analogue for Gabidulin codes of a theorem of 
Wei on security drops in Section \ref{sss7}.

\section{Preliminaries}\label{prel}

In this section we briefly recall some basic notions of coding theory. In particular,
we give the definition of generalized weights for the Hamming and the rank
metric.

\begin{notation}
 Throughout this paper, $q$  denotes a prime power, and $\F_q$ the
finite field with $q$ elements. We also work with fixed positive integers
$n$, $k$ and $m$ with $k \le m$. For $s \in \N_{\ge 1}$, we set $[s]:=\{
1,2,...,s\}$, and if $\F$ is a field then the entries of a vector $v \in \F^s$
are denoted by 
$v_1,...,v_s \in
\F$. The vector space of matrices of size $t \times s$ over the field $\F$ is
$\mbox{Mat}(t\times s, \F)$, and if $M$ is any such matrix
we
denote by  $\mbox{rowsp}(M)$  the vector space generated over
$\F$ by the rows of $M$. If we work with a field extension $\K \supseteq \F$, to avoid confusion
we may also write $\mbox{rowsp}_{\K}(M)$ for the space generated over $\K$
by the rows of $M$. The rank of a matrix $M$ is $\mbox{rk}(M)$, while $M^t$
denotes the transpose of $M$. The trace of a square matrix
$M$ is $\mbox{Tr}(M)$.
\end{notation}

Let us start with classical codes in the Hamming metric.

\begin{definition} \label{def1}
  A \textbf{linear code} of length $n$ and dimension $t$ is a
$t$-dimensional $\F_q$-subspace $C \subseteq \F_q^n$. The
\textbf{Hamming} \textbf{weight} of a
vector $v \in \F_q^n$ is defined as $\mbox{wt}(v):= |\{ i \in [n] : v_i \neq
0\}|$. The \textbf{minimum weight} of a non-zero code $C$ is $\mbox{minwt}(C):=
\min \{ \mbox{wt}(c) : c \in C, \ c \neq 0 \}$, and the \textbf{maximum weight}
of any code $C$ is $\mbox{maxwt}(C) := \max \{ \mbox{wt}(c) : c \in C\}$.
The \textbf{support} of an $\F_q$-subspace $D \subseteq \F_q^n$ is defined by
$\chi(D):= \{ i \in [n] : \exists \ d \in D \mbox{ with } d_i \neq 0\}$. Given
a $t$-dimensional   non-zero code $C \subseteq \F_q^n$ and an integer
$1 \le r \le t$, the \textbf{$r$-th generalized Hamming weight}
of $C$ is
$d_r(C):= \min \{ |\chi(D)| : D \subseteq C, \ \dim_{\F_q}(D)=r\}$.
\end{definition}

 In \cite{wei} Wei proved that  generalized Hamming weights  characterize
the worst-case security drops of a linear code employed in the coding
scheme for wiretap channels
proposed in \cite{ozarow}. 
The main algebraic properties of generalized Hamming weights are summarized in
the
following result.

\begin{theorem}[see \cite{wei}] \label{propr1}
 Let $C \subseteq \F_q^n$ be a non-zero linear code of dimension $1 \le t
\le n$ over $\F_q$. The following hold.
 \begin{enumerate}
  \item $d_1(C)= \mbox{minwt}(C)$.
  \item $d_t(C) \le n$.
  \item For any $1 \le r \le t-1$ we have $d_r(C)< d_{r+1}(C)$.
  \item For any $1 \le r \le t$ we have $d_r(C) \le n-t+r$.
 \end{enumerate}
 \end{theorem}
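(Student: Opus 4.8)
The plan is to take the four items in order, noting that (1) and (2) are essentially definitional, that (3) is the crux, and that (4) follows formally from (2) and (3).

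First I would dispose of (1) and (2). For (1), any one-dimensional subspace $D \subseteq C$ has the form $D = \langle c \rangle$ for some nonzero $c \in C$; since the support of a vector is unchanged under multiplication by a nonzero scalar, $\chi(D) = \{ i \in [n] : c_i \neq 0 \}$ and so $|\chi(D)| = \mbox{wt}(c)$. Minimizing over all nonzero $c \in C$ then yields $d_1(C) = \mbox{minwt}(C)$. For (2), the only $t$-dimensional subspace of the $t$-dimensional code $C$ is $C$ itself, so $d_t(C) = |\chi(C)|$, and $\chi(C) \subseteq [n]$ gives $d_t(C) \le n$.

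The main work is (3). Fix $r$ with $1 \le r \le t-1$ and choose an $(r+1)$-dimensional subspace $D \subseteq C$ attaining $d_{r+1}(C) = |\chi(D)|$. Since $D \neq 0$ its support is nonempty, so I may pick an index $i \in \chi(D)$. The evaluation map $D \to \F_q$, $d \mapsto d_i$, is a linear functional, and it is nonzero precisely because $i \in \chi(D)$; hence its kernel $D' := \{ d \in D : d_i = 0 \}$ has dimension $r$. By construction no vector of $D'$ is nonzero in position $i$, so $i \notin \chi(D')$, and since $D' \subseteq D$ we get $\chi(D') \subseteq \chi(D) \setminus \{ i \}$, whence $|\chi(D')| \le |\chi(D)| - 1$. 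As $D'$ is an admissible $r$-dimensional subspace of $C$, this gives $d_r(C) \le |\chi(D')| \le d_{r+1}(C) - 1 < d_{r+1}(C)$.

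Finally, for (4) I would iterate (3) to obtain the strictly increasing chain of integers $d_r(C) < d_{r+1}(C) < \cdots < d_t(C)$, so that $d_t(C) \ge d_r(C) + (t-r)$; combining this with the bound $d_t(C) \le n$ from (2) yields $d_r(C) \le d_t(C) - (t-r) \le n - t + r$. The only genuinely nontrivial point in the whole argument is the support-shrinking step in (3): the key observation is that passing to the subspace on which a single coordinate vanishes drops the dimension by exactly one while removing at least that coordinate from the support, which is what converts a dimension inequality into the strict weight inequality $d_r(C) < d_{r+1}(C)$.
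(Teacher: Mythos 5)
Your proof is correct. The paper does not prove this theorem itself but simply cites Wei, so there is no internal proof to compare against; your argument is the standard one, and in particular your derivation of item (4) by iterating the monotonicity of (3) down from the bound in (2) is exactly what the paper's accompanying remark indicates ("property (4) easily follows from (2) and (3)"). The key step in (3) — cutting $D$ by the kernel of a coordinate functional supported at some $i \in \chi(D)$ to drop both the dimension and the support size by at least one — is sound and is the classical argument.
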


We now introduce Gabidulin rank-metric codes and generalized rank weights.
Given a vector $v=(v_1,...,v_k) \in \F_{q^m}^k$, let
$v^q:=(v_1^q,...,v_k^q)$.
A subspace $V \subseteq \F_{q^m}^k$ is  \textbf{Frobenius-closed}
if $v \in V$ implies $v^q \in V$. We denote by $\Lambda_q(k,m)$ the
set of
Frobenius-closed spaces $V \subseteq \F_{q^m}^k$.

\begin{definition} \label{rgw}
 A \textbf{Gabidulin code} $C$ of length $k$
and dimension $t$ is an $\F_{q^m}$-subspace $C \subseteq \F_{q^m}^k$
of dimension $t$ over $\F_{q^m}$.
The
\textbf{rank} of any
vector $v \in \F_{q^m}^k$ is  $\mbox{rk}(v):=
\dim_{\F_q}
\mbox{Span}_{\F_q} \{ v_1,...,v_k\}$. The \textbf{minimum rank} of a non-zero
Gabidulin code $C \subseteq \F_{q^m}^k$ is $\mbox{minrk}(C):=
\min \{ \mbox{rk}(c) : c \in C, \ c \neq 0 \}$, and the \textbf{maximum
rank}
of any Gabidulin code $C$ is $\mbox{maxrk}(C) := \max \{ \mbox{rk}(c) : c
\in
C\}$.
Given
a $t$-dimensional   non-zero Gabidulin code $C \subseteq \F_{q^m}^k$ and an integer
$1 \le r \le t$, the \textbf{$r$-th generalized rank weight} of $C$ is  
$m_r(C):=
\min \{\dim_{\F_{q^m}}(V) : V \in \Lambda_q(k,m), \ \dim_{\F_{q^m}}(V \cap C)
\ge
r\}$.
\end{definition}

 In the literature researchers often call ``Gabidulin codes'' only the
 rank-metric codes obtained evaluating certain sets of linearized polynomials
 (see \cite{gabid}).
 For practical reasons we will not make this distinction here, and
 simply call 
 ``Gabidulin code'' any $\F_{q^m}$-subspace $C \subseteq \F_{q^m}^k$.

 In \cite{metrics3} Silva and Kschischang propose a coding scheme to
secure a network communication against an eavesdropper based on Gabidulin
codes. Generalized rank weights were introduced by
Kurihara,
 Matsumoto and Uyematsu in \cite{kmat} to measure the performance of a Gabidulin
code when
employed in the cited
scheme. The following theorem summarizes the main properties of generalized
rank weights established in \cite{kmat}. See also \cite{ducoat}.

\begin{theorem}[see \cite{kmat}] \label{propr2}
Let $C \subseteq \F_{q^m}^k$ be a non-zero
Gabidulin code of dimension $1 \le t \le k$ over $\F_{q^m}$. 
 The following hold.
 \begin{enumerate}
  \item $m_1(C)= \mbox{minrk}(C)$.
  \item $m_t(C) \le k$.
  \item For any $1 \le r \le t-1$ we have $m_r(C)< m_{r+1}(C)$.
  \item For any $1 \le r \le t$ we have $m_r(C) \le k-t+r$.
 \end{enumerate}
\end{theorem}

\section{Generalized Hamming weights and anticodes} \label{sss1}
In this section we characterize the generalized Hamming weights of a linear
code  in terms of the intersections of the code with optimal anticodes in the
Hamming metric. Recall that a matrix $M$ over a field is  in \textbf{row-reduced
echelon form} (abbreviated as ``\textbf{RRE form}'')
 if:
 \begin{itemize}
  \item each row of $M$ has more initial zeros than the previous rows,
  \item the first non-zero entry of any non-zero row of $M$ (called the
\textbf{pivot entry} of the row) equals $1$, and 
  it is also the only non-zero entry in its column.
 \end{itemize}
If $M$ is in RRE form, then the columns of $M$ that contain a pivot entry are
the \textbf{pivot columns} of $M$. 
Each pivot column contains only one non-zero entry, and such entry equals $1$.
It is well-known that any matrix can be put in row-reduced echelon form by
performing elementary operations
on the rows. Moreover, the row-reduced echelon form of a matrix is unique.
Therefore, given a field $\F$, an integer $s \ge 1$ and a subspace  $V \subseteq \F^s$ of
dimension $1 \le t \le s$, 
there exists a unique matrix $M \in \mbox{Mat}(t \times s, \F)$ in
row-reduced echelon form
such that $\mbox{rowsp}(M)=V$. We denote such matrix by $\mbox{RRE}(V)$.

Now if $C \subseteq \F_q^n$ is a non-zero linear code then the sum of the
rows of
$\mbox{RRE}(C)$ is a vector of Hamming weight at least $\dim(C)$. This
shows the following bound.

\begin{proposition} \label{bound1}
 Let $C \subseteq \F_q^n$ be a linear code. We have
$\dim_{\F_q}(C) \le \mbox{maxwt}(C)$.
\end{proposition}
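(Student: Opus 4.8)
The plan is to prove the inequality by exhibiting a single codeword whose weight is at least $\dim_{\F_q}(C)$; since $\mbox{maxwt}(C)$ is a maximum taken over all codewords, this suffices. Set $t := \dim_{\F_q}(C)$ and assume $t \ge 1$ (for $C = 0$ one has $t = 0$ and the bound is trivial). First I would invoke Remark \ref{rre} to fix the canonical generator matrix $G := \mbox{RRE}(C) \in \mbox{Mat}(t \times n, \F_q)$, whose rows $g_1, \dots, g_t$ form a basis of $C$, and let $p_1 < \cdots < p_t$ denote its pivot columns.

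The key observation is structural rather than computational. By Definition \ref{rform}, each pivot column $p_j$ contains exactly one non-zero entry among all the rows of $G$, namely the pivot entry of $g_j$, which equals $1$. Consider the codeword $c := g_1 + \cdots + g_t \in C$. Evaluating $c$ at coordinate $p_j$, every summand $g_i$ with $i \neq j$ vanishes there while $g_j$ contributes $1$, so $c_{p_j} = 1 \neq 0$ for every $j \in [t]$. Hence $c$ is non-zero in at least the $t$ distinct coordinates $p_1, \dots, p_t$, giving $\mbox{wt}(c) \ge t$ and therefore $\mbox{maxwt}(C) \ge \mbox{wt}(c) \ge t = \dim_{\F_q}(C)$, as claimed.

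The point that requires care is the natural temptation to produce the desired codeword as a \emph{generic} linear combination of a basis: over a small field $\F_q$ such a genericity argument can fail, and in fact a linear code need not contain any codeword whose support is all of $\chi(C)$ (the binary even-weight code of length $3$ already illustrates this). The proposal therefore avoids any counting argument over $\F_q$ and instead exploits the rigidity of the pivot structure of $\mbox{RRE}(C)$, which makes the explicit codeword $c$ work uniformly in $q$. I expect no genuinely hard step to remain; the entire content is the bookkeeping of pivot columns, which is supplied directly by Definition \ref{rform} and Remark \ref{rre}.
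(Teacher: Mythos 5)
Your proposal is correct and is essentially identical to the paper's own proof: both take $M:=\mbox{RRE}(C)$ and observe that the sum of its rows is a codeword of weight at least $t$ because each pivot column contributes a non-zero entry. You merely spell out the pivot-column bookkeeping that the paper leaves implicit.
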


\begin{definition} \label{antiH}
 A code $C \subseteq \F_q^n$ attaining the bound of Proposition \ref{bound1} is
 an \textbf{optimal linear anticode}. We denote the set of
optimal
linear anticodes in $\F_q^n$ by $\mA^H_q(n)$.
\end{definition}

One can  construct  simple optimal linear anticodes as
follows. 
 Let $S \subseteq [n]$ be any subset. The \textbf{free code} over $\F_q$ of length
$n$ \textbf{supported} on $S$ is
 $C_q(n,S):=\{ v \in \F_q^n : v_i=0 \mbox{ for all } i \in
[n] \setminus S\}$.
Clearly, any free code $C_q(n,S)$ has
$\dim_{\F_q}(C_q(n,S))=\mbox{maxwt}(C_q(n,S))=|S|$. Thus free
codes
are 
 optimal linear anticodes. Vice versa, we now show that for $q \ge 3$
all
optimal linear anticodes are free
codes.

\begin{lemma} \label{lll1}
 Assume $q \ge 3$. Let $t \ge 1$ be an integer, and let $c_1,...,c_t \in \F_q$
be not all zero.
There exist $a_1,...,a_t \in \F_q\setminus \{ 0 \}$ such that $\sum_{i=1}^t
a_ic_i \neq 0$.
\end{lemma}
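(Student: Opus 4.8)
The plan is to reduce the problem from choosing $t$ nonzero coefficients simultaneously to choosing a single one, by fixing all the others in advance. Since the $c_i$ are not all zero, I first pick an index $j$ with $c_j \neq 0$. For every $i \neq j$ I simply set $a_i := 1$, which is a legitimate (nonzero) choice, and I abbreviate $S := \sum_{i \neq j} c_i$. The whole sum then becomes $S + a_j c_j$, a function of the single unknown $a_j$, and the task is to produce a nonzero $a_j$ making this quantity nonzero.

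Next I would count the values of $a_j$ that are forbidden. The requirement $a_j \neq 0$ rules out one element of $\F_q$. The requirement $S + a_j c_j \neq 0$ is, since $c_j \neq 0$, equivalent to $a_j \neq -S c_j^{-1}$, which rules out at most one further element. Hence the set of unacceptable values of $a_j$ has at most two elements, so as long as $|\F_q| = q > 2$ there remains at least one admissible choice $a_j \in \F_q \setminus \{0\}$ with $\sum_{i=1}^t a_i c_i = S + a_j c_j \neq 0$. (The degenerate case $t=1$ is covered as well: there $S=0$ and only $a_j=0$ is forbidden.)

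There is no real technical obstacle here; the argument is an elementary counting step. The only point that genuinely requires care is the bookkeeping of forbidden values, and this is precisely where the hypothesis $q \ge 3$ is indispensable: two elements of $\F_q$ must be avoided, so the field has to contain at least three elements to guarantee a survivor. This also foreshadows the counterexample for $q=2$ promised in Remark \ref{q2}, where one can be forced to discard both of the only two field elements. An alternative route would be induction on $t$, peeling off one coefficient at a time, but the direct single-variable reduction above is shorter and makes the role of the bound $q \ge 3$ completely transparent.
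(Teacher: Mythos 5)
Your proof is correct and takes essentially the same approach as the paper's: isolate an index $j$ with $c_j \neq 0$, fix the remaining coefficients, and use $q \ge 3$ to guarantee a choice of $a_j$ avoiding the at most two forbidden values ($0$ and $-Sc_j^{-1}$). The paper phrases this as a multiplicative perturbation of an arbitrary initial nonzero tuple (replacing $b_j$ by $bb_j$ with $b \notin \{0,1\}$) rather than a direct count of excluded elements, but the underlying mechanism is identical.
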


\begin{proof}
 Choose $b_1,...,b_t \in \F_q\setminus \{ 0 \}$. If $\sum_{i=1}^t
b_ic_i \neq 0$ then take $a_i=b_i$ for $i \in [t]$.
Assume $\sum_{i=1}^t
b_ic_i = 0$. By hypothesis, there exists $j \in [t]$ such that $c_j \neq 0$. 
Let $b \in
\F_q \setminus \{0,1\}$. Define $a_j:=bb_j$, and $a_i:=b_i$ for $i \in [t]
\setminus \{ j\}$. Since $b \neq 0$ we have $a_i \neq 0$ for all $i
\in [t]$.
Moreover,
$$\sum_{i=1}^t a_ic_i = bb_jc_j+ \sum_{i \neq j} b_ic_i=b_jc_j+(b-1)b_jc_j +
\sum_{i \neq j} b_ic_i = \sum_{i=1}^t b_ic_i+(b-1)b_jc_j=(b-1)b_jc_j.$$
Since $b \neq 1$, $b_j \neq 0$ and $c_j \neq 0$ we have $(b-1)b_jc_j \neq 0$.
\end{proof}

\begin{proposition} \label{antifree}
 Assume $q \ge 3$. Let $C \subseteq \F_q^n$ be a linear code of dimension
$t$. Then $C \in \mA_q^H(n)$ if and only if $C=C_q(n,S)$ for some $S
\subseteq [n]$ with $|S|=t$.
\end{proposition}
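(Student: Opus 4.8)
The implication $(2) \Rightarrow (1)$ is immediate from Remark \ref{remfree}, so the plan is to concentrate on $(1) \Rightarrow (2)$. The natural candidate for the supporting set is the support of the code itself: I would set $S := \chi(C)$ and observe that, by the definition of $\chi$, every codeword vanishes outside $S$, so $C \subseteq C_q(n,S)$. Combined with Remark \ref{remfree} this gives $t = \dim_{\F_q}(C) \le \dim_{\F_q}(C_q(n,S)) = |S|$. Since $C \subseteq C_q(n,S)$ and the ambient spaces are finite-dimensional, it then suffices to establish the reverse inequality $|S| \le t$: equality of dimensions forces $C = C_q(n,S)$, which is exactly the conclusion. (The degenerate case $t = 0$, where $C = \{0\} = C_q(n,\emptyset)$, is handled at once.)

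To prove $|S| \le t$ I would argue by contradiction, assuming $|S| > t$ and exhibiting a codeword of weight strictly larger than $t$; this contradicts $\mbox{maxwt}(C) = t$, which is the defining property of an optimal anticode (Definition \ref{antiH}). Let $M := \mbox{RRE}(C)$ with rows $r_1, \dots, r_t$ and pivot columns $p_1, \dots, p_t$. The pivots form a set of $t$ distinct elements of $S$, so if $|S| > t$ there is a column $j_0 \in S \setminus \{p_1, \dots, p_t\}$; because $j_0 \in S = \chi(C)$, the entries $(r_1)_{j_0}, \dots, (r_t)_{j_0}$ are not all zero. At this point I would invoke Lemma \ref{lll1} to choose $a_1, \dots, a_t \in \F_q \setminus \{0\}$ with $\sum_{i=1}^t a_i (r_i)_{j_0} \neq 0$, and set $c := \sum_{i=1}^t a_i r_i \in C$.

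The remaining step is to read off the weight of $c$ from the echelon structure of $M$. By Definition \ref{rform} the pivot column $p_i$ contains a single nonzero entry, equal to $1$ and located in row $i$, so $c_{p_i} = a_i \neq 0$ for every $i \in [t]$; thus $c$ is nonzero in all $t$ pivot columns. Moreover $c_{j_0} = \sum_{i=1}^t a_i (r_i)_{j_0} \neq 0$ by the choice of the $a_i$, and $j_0$ differs from every pivot column. Hence $c$ has at least $t+1$ nonzero coordinates, so $\mbox{maxwt}(C) \ge \mbox{wt}(c) \ge t+1 > t$, the desired contradiction. This yields $|S| = t$ and completes the argument.

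The only place where the hypothesis $q \ge 3$ is used is the application of Lemma \ref{lll1}, and this is precisely the step I expect to be the crux. The requirement that all coefficients $a_i$ be nonzero is what keeps all $t$ pivot coordinates of $c$ alive, while simultaneously making the extra coordinate $c_{j_0}$ nonzero; it is essential that Lemma \ref{lll1} produces a nonzero combination with \emph{all} coefficients nonzero, rather than merely some nonzero combination. Over $\F_2$ this freedom disappears, and indeed the statement fails for $q = 2$ (cf. Remark \ref{q2}), so any attempt to streamline the argument away from Lemma \ref{lll1} would have to break down exactly here.
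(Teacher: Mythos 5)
Your proof is correct and follows essentially the same route as the paper: both arguments take $M=\mbox{RRE}(C)$, use Lemma \ref{lll1} to combine the rows with all-nonzero coefficients, and derive a codeword of weight at least $t+1$ from any nonzero non-pivot column, the only difference being that you phrase the conclusion as $|\chi(C)|=t$ while the paper phrases it as ``every non-pivot column of $M$ is zero.'' No gaps.
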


\begin{proof}
The implication $(\Leftarrow)$ is clear. Let us prove 
$(\Rightarrow)$. If $t=0$ or $t=n$ then the result is trivial.
Assume $0 < t < n$.  
If $C$ is an optimal
anticode we have $t=\mbox{maxwt}(C)$. Let $M:=\mbox{RRE}(C)$. We will 
show that any non-pivot column of $M$ is zero. By contradiction, let $j \in
[n]$ be the index of a non-zero non-pivot column of $M$, and let
$c^j_1,...,c^j_t$ be the entries of such column. By Lemma \ref{lll1} there
exist
$a_1,...,a_t \in \F_q \setminus \{ 0\}$
with $\sum_{i=1}^t
a_ic^j_i \neq 0$. Denote by $M_1,...,M_t \in \F_q^n$ the rows of $M$. We
have that $\sum_{i=1}^t a_iM_i \in C$ has Hamming weight at least $t+1$, a
contradiction. It follows $c^j_i =0$ for all $i \in [t]$.  Hence we proved
$C \subseteq C_q(n,S)$, where $S \subseteq [n]$ is the set of pivot columns
of $M$. In particular, $|S|=t$, and so 
$C = C_q(n,S)$.
\end{proof}

Proposition \ref{antifree} allows us to characterize the generalized
Hamming weights of a linear code in terms of optimal
anticodes as follows.

\begin{theorem}\label{cara1}
 Assume $q \ge 3$. Let $C \subseteq \F_q^n$ be a non-zero linear code of
dimension
 $1 \le t \le n$. For
any integer $1
\le r \le t$ we have $d_r(C) = \min \{\dim_{\F_q}(A) : A \in \mA^H_q(n), \
\dim_{\F_q}(A \cap C) \ge
r\}$.
\end{theorem}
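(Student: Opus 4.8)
The plan is to prove the two inequalities between $d_r(C)$ and the quantity on the right-hand side, which I denote by $e_r(C):= \min \{\dim_{\F_q}(A) : A \in \mA^H_q(n), \ \dim_{\F_q}(A \cap C) \ge r\}$. The crucial input is Proposition \ref{antifree}: since $q \ge 3$, every optimal anticode $A \in \mA^H_q(n)$ is a free code $C_q(n,S)$ with $\dim_{\F_q}(A)=|S|$. This lets me translate statements about anticodes into statements about supports, which is exactly the language in which $d_r(C)$ is defined.

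First I would show $e_r(C) \le d_r(C)$. Choose a subspace $D \subseteq C$ with $\dim_{\F_q}(D)=r$ and $|\chi(D)|=d_r(C)$, and set $S:=\chi(D)$. By the definition of the support, every vector of $D$ vanishes outside $S$, so $D \subseteq C_q(n,S)$, and hence $D \subseteq C_q(n,S)\cap C$. Therefore $\dim_{\F_q}(C_q(n,S)\cap C)\ge r$, and by Remark \ref{remfree} the free code $A:=C_q(n,S)$ is an admissible anticode in the minimization defining $e_r(C)$, of dimension $|S|=d_r(C)$. This gives $e_r(C)\le d_r(C)$.

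Next I would prove the reverse inequality $d_r(C)\le e_r(C)$. Let $A \in \mA^H_q(n)$ attain $e_r(C)$, so that $\dim_{\F_q}(A\cap C)\ge r$ and $\dim_{\F_q}(A)=e_r(C)$. By Proposition \ref{antifree} I may write $A=C_q(n,S)$ with $|S|=e_r(C)$. Since $\dim_{\F_q}(A\cap C)\ge r$, I can pick an $r$-dimensional subspace $D\subseteq A\cap C$. Then $D\subseteq C$ with $\dim_{\F_q}(D)=r$, and because $D\subseteq C_q(n,S)$ every vector of $D$ is supported on $S$, giving $\chi(D)\subseteq S$ and thus $|\chi(D)|\le |S|=e_r(C)$. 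By the definition of $d_r(C)$ as a minimum over $r$-dimensional subcodes, $d_r(C)\le |\chi(D)|\le e_r(C)$, and combining the two inequalities yields the claimed equality.

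There is no serious obstacle once Proposition \ref{antifree} is in hand; the whole argument is a dictionary between subcodes-with-support and free-code anticodes. The only points requiring a little care are the elementary facts that $D\subseteq C_q(n,\chi(D))$ and, conversely, that a subspace of a free code $C_q(n,S)$ has support contained in $S$ — both immediate from Definition \ref{deffree} and the definition of $\chi$ — together with the standard observation that a space of dimension at least $r$ contains a subspace of dimension exactly $r$. The hypothesis $q\ge 3$ enters only through Proposition \ref{antifree}, and indeed Remark \ref{q2} signals that the anticode characterization, and hence this theorem, can fail for $q=2$.
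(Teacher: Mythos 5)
Your proof is correct and follows essentially the same route as the paper's: one inequality via Proposition \ref{antifree} to identify an optimal anticode meeting $C$ in dimension $\ge r$ with a free code $C_q(n,S)$ and bound the support of an $r$-dimensional subcode by $S$, and the other by packaging a minimizing subcode $D$ into the free code $C_q(n,\chi(D))$. The only difference is the order in which the two inequalities are presented, so there is nothing to add.
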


\begin{proof}
 Fix $1 \le r \le t$. Define $d'_r(C) := \min \{\dim_{\F_q}(A) : A \in \mA^H_q(n), \ \dim_{\F_q}(A
\cap C) \ge
r\}$. Let $A \in \mA^H_q(n)$ with $\dim_{\F_q}(A)=
d'_r(C)$ and $\dim_{\F_q}(A
\cap C) \ge
r$. By Proposition \ref{antifree}, $A=C_q(n,S)$ for some
$S
\subseteq [n]$ with $|S|=\dim_{\F_q}(A)$. Let $D$ be an
$r$-dimensional subspace of $A \cap C$. We have $\chi(D) \subseteq
\chi(A \cap C) \subseteq \chi(A)=\chi(C_q(n,S))=S$, and so
$|\chi(D)| \leq |S|=\dim_{\F_q}(A)$. This proves $d_r(C) \le
d'_r(C)$.  Let now $D \subseteq C$ with
$\dim_{\F_q}(D)=r$ and $|\chi(D)|=d_r(C)$. Define $A:=C_q(n,\chi(D))$. Since $A
\supseteq D$ and $D \subseteq C$, we have $\dim_{\F_q}(A \cap C) \ge \dim_{\F_q}(D \cap C)= \dim_{\F_q}(D)=r$.
Moreover, $\dim_{\F_q}(A)=|\chi(D)|=d_r(C)$, and so $d'_r(C) \le d_r(C)$.
\end{proof}

 Notice that Theorem \ref{cara1} and Proposition \ref{antifree} do not hold in
general
when $q=2$. Take e.g. $n=3$, and
 let $C$ be the linear code generated over $\F_2$ by $(1,0,1)$ and
$(0,1,1)$. We have $d_2(C)=|\chi(C)|=3$. On the other hand, $C$ is an optimal
linear anticode of maximum weight $2$, even if it is not of the form
$C_2(3,S)$ for some $S\subseteq [n]$ with $|S|=2$. Following the notation of the proof of Theorem
\ref{cara1} we have $d'_2(C)=\dim_{\F_q}(C)=2 \neq d_2(C)$.

\section{Generalized rank weights and anticodes} \label{sss2}

The aim of this section is to establish the analogue of Theorem \ref{cara1} for Gabidulin
codes and generalized rank weights. We start studying optimal anticodes in
the rank
metric, giving a bound on their dimension.

\begin{proposition} \label{bound2}
 Let $C \subseteq \F_{q^m}^k$ be a 
Gabidulin code. We have
$\dim_{\F_{q^m}}(C) \le \mbox{maxrk}(C)$.
\end{proposition}

\begin{proof}
 If $C=0$ the result is trivial. Assume $t:=\dim_{\F_{q^m}}(C) \ge 1$ and
let $M_1,...,M_t$ denote the rows of $M:=\mbox{RRE}(C) 
 \in \mbox{Mat}(t \times k, \F_{q^m})$. Let
$\alpha_1,...,\alpha_t \in \F_{q^m}$ be independent over $\F_q$. Then
$\sum_{i=1}^t \alpha_iM_i \in C$ has
$\alpha_1,...,\alpha_t$ among its components. In particular,
$\mbox{rk}(\sum_{i=1}^t \alpha_iM_i) \ge t$.
\end{proof}

\begin{definition} \label{antiR}
 A code $C \subseteq \F_{q^m}^k$ attaining the bound of Proposition
\ref{bound2} is  an \textbf{optimal Gabidulin anticode}.
We denote the set of optimal Gabidulin anticodes in $\F_{q^m}^k$ by
$\mA^G_q(k,m)$.
\end{definition}

We now present a series of preliminary results relating Frobenius-closed spaces,
matrices in RRE form, and optimal anticodes.

\begin{theorem}[\cite{gp}, Theorem 1] \label{defsufq}
Let $V \subseteq \F_{q^m}^k$ be an $\F_{q^m}$-subspace. Then $V \in
\Lambda_q(k,m)$
if and only if $V$ has a basis made of vectors with entries in
$\F_q$ (in short, \textbf{defined} over $\F_q$).
\end{theorem}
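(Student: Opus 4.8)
The plan is to prove the two implications separately, treating ``basis over $\F_q$ implies Frobenius-closed'' as the routine direction and the converse as the substantive part.

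For the easy direction, suppose $V$ has a basis $w_1, \dots, w_t$ with all entries in $\F_q$. Given an arbitrary $v = \sum_i \lambda_i w_i \in V$ with $\lambda_i \in \F_{q^m}$, I would compute $v^q$ componentwise, using that the Frobenius $x \mapsto x^q$ is additive and multiplicative on $\F_{q^m}$ and fixes $\F_q$ pointwise. Since each entry of $w_i$ lies in $\F_q$, this yields $v^q = \sum_i \lambda_i^q w_i \in V$, and hence $V \in \Lambda_q(k,m)$.

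For the converse, the key tool is the uniqueness of the row-reduced echelon form (Remark \ref{rre}). Set $M := \mbox{RRE}(V) \in \mbox{Mat}(t \times k, \F_{q^m})$ with $t = \dim_{\F_{q^m}}(V)$, and let $M^{(q)}$ denote the matrix obtained by raising every entry of $M$ to the $q$-th power. First I would check that $M^{(q)}$ is again in row-reduced echelon form: raising to the $q$-th power is a bijection of $\F_{q^m}$ fixing $0$ and $1$, so it preserves the pattern of zero and nonzero entries and leaves each pivot entry equal to $1$; thus both conditions of Definition \ref{rform} are inherited by $M^{(q)}$. Next I would identify its rowspace with $V$. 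Writing $\phi$ for the coordinatewise Frobenius on $\F_{q^m}^k$, the rows of $M^{(q)}$ are $\phi(M_1), \dots, \phi(M_t)$, and since $\lambda \mapsto \lambda^q$ is surjective on $\F_{q^m}$ one verifies $\mbox{rowsp}(M^{(q)}) = \phi(\mbox{rowsp}(M)) = \phi(V)$. Because $V$ is Frobenius-closed we have $\phi(V) \subseteq V$, while the preserved pivot structure keeps the $t$ rows of $M^{(q)}$ independent, so $\dim_{\F_{q^m}} \mbox{rowsp}(M^{(q)}) = t = \dim_{\F_{q^m}}(V)$; hence $\mbox{rowsp}(M^{(q)}) = V$. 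Therefore $M^{(q)}$ is a row-reduced echelon matrix with rowspace $V$, and uniqueness forces $M^{(q)} = M$. Every entry $x$ of $M$ then satisfies $x^q = x$, i.e. lies in $\F_q$, so the rows of $M$ furnish a basis of $V$ defined over $\F_q$.

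The main obstacle is the converse, and within it the verification that $M^{(q)}$ is again in row-reduced echelon form with rowspace exactly $V$; once these two facts are in hand, the uniqueness of $\mbox{RRE}(V)$ closes the argument and everything else is bookkeeping. This route has the further advantage of being explicitly algorithmic, since $\mbox{RRE}(V)$ is produced by linear algebra over $\F_{q^m}$, which is presumably the ``convenient method'' for computing an $\F_q$-rational basis advertised in the introduction.
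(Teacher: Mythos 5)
Your proof is correct. Note, however, that the paper does not prove this statement at all: it is imported as Theorem 1 of Giorgetti--Previtali \cite{gp}, and the paper then derives from it the characterization that $V\in\Lambda_q(k,m)$ if and only if $\mbox{RRE}(V)$ has entries in $\F_q$ (Corollary \ref{proprbase}). Your argument runs this logic in the opposite direction: you prove the $\mbox{RRE}$ characterization directly. The forward direction is the standard computation, and in the converse the key observation --- that the entrywise Frobenius $M\mapsto M^{(q)}$ preserves the row-reduced echelon shape (it is a field automorphism fixing $0$ and $1$, hence preserves the zero pattern and the pivots), sends $\mbox{rowsp}(M)$ to $\mbox{rowsp}(M^{(q)})=\phi(V)\subseteq V$ of the same dimension $t$, and therefore by uniqueness of the row-reduced echelon form forces $M^{(q)}=M$ --- is sound; the surjectivity of $\lambda\mapsto\lambda^q$ is exactly what is needed to identify $\mbox{rowsp}(M^{(q)})$ with $\phi(V)$ as $\F_{q^m}$-spaces, and the $t$ rows of $M^{(q)}$ remain independent because they are nonzero and in echelon form. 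What your route buys is self-containedness: it proves Theorem \ref{defsufq} and Corollary \ref{proprbase} simultaneously, making the citation of \cite{gp} logically unnecessary, and it matches the algorithmic content of Remark \ref{eff} (the row reduction itself produces the $\F_q$-rational basis).
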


Combining Theorem \ref{defsufq} with the uniqueness of the RRE form 
we obtain the following useful criterion to test whether a space is
Frobenius-closed or not.
The result also provides  an efficient way to
compute
a basis defined over $\F_q$ of a Frobenius-closed space $V \subseteq
\F_{q^m}^k$, as we will show in an example.

\begin{corollary} \label{proprbase}
 Let $V \subseteq \F_{q^m}^k$ be a non-zero subspace.
Then $V \in \Lambda_q(k,m)$ if and only if $\mbox{RRE}(V)$ is a matrix with
entries in $\F_q$.
\end{corollary}

\begin{example}
 Let $q=2$ and $k=m=4$. Write $\F_{2^4}=\F_{2}[\xi]$, where $\xi$ satisfies
 $\xi^4+\xi+1 =0$. Let $V \subseteq \F_{2^4}^4$ be the space  
generated by the vectors
 $v_1:=(\xi, \xi^2,\xi^5,\xi)$ and $v_2:=(\xi^2,\xi^4,\xi^{10},\xi^2)$, and let
$M$ denote the matrix having $v_1$ and $v_2$ as rows. 
The RRE form of $M$ is $$\begin{bmatrix}
    1 & 0 & 1 & 1 \\ 0 & 1 & 1 & 0
   \end{bmatrix}.$$ Therefore 
 $V$ is Frobenius-closed, and $\{
(1,0,1,1), (0,1,1,0)\}$ is a basis of $V$ defined over $\F_2$.
\end{example}

We will need the following preliminary lemma.

\begin{lemma}\label{tecn}
 Let $H \subseteq \F_{q^m}$ be an $\F_q$-subspace of dimension $h$ over $\F_q$,
 with $1 \le h \le m-2$. Let $x \in \F_{q^m}\setminus H$, and $y \in \F_{q^m}$.
 There exists $\alpha \in \F_{q^m} \setminus H$ such that $x+\alpha y \notin H
\oplus \langle \alpha \rangle$, where $\langle \alpha \rangle \subseteq
\F_{q^m}$ denotes the space generated by $\alpha$ over $\F_q$.
\end{lemma}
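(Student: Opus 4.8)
The plan is to prove the statement by a counting argument in the $\F_q$-vector space $\F_{q^m}$, which has $q^m$ elements. Call an element $\alpha \in \F_{q^m}$ \emph{bad} if it fails the conclusion, i.e. if either $\alpha \in H$ or $x + \alpha y \in H \oplus \langle \alpha \rangle$; let $B$ denote the set of bad elements. It suffices to show $|B| < q^m$, since then any $\alpha \in \F_{q^m} \setminus B$ satisfies the required properties (note that $\alpha \notin H$ forces the sum $H \oplus \langle \alpha \rangle$ to be direct, so the statement is consistent).

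First I would linearize the membership condition. Since $H \oplus \langle \alpha \rangle = H + \langle \alpha \rangle = \{ h_0 + c\alpha : h_0 \in H, \ c \in \F_q \}$, the condition $x + \alpha y \in H \oplus \langle \alpha \rangle$ is equivalent to the existence of some $c \in \F_q$ with $x + \alpha(y - c) \in H$. For each fixed $c \in \F_q$ I would then analyze the set $B_c := \{ \alpha \in \F_{q^m} : x + \alpha(y-c) \in H \}$. If $y = c$ (which can occur for at most one value of $c$, and only when $y \in \F_q$), then $B_c = \{\alpha : x \in H\} = \emptyset$ because $x \notin H$ by hypothesis. If $y \neq c$, then multiplication by $y - c$ is an $\F_q$-linear bijection of $\F_{q^m}$, so $B_c$ is the preimage of the affine coset $H - x$, hence an affine translate of the $h$-dimensional $\F_q$-subspace $(y-c)^{-1}H$; in particular $|B_c| = q^h$.

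A union bound then gives the key estimate: since $B = H \cup \bigcup_{c \in \F_q} B_c$ and there are at most $q$ values of $c$, we obtain $|B| \le |H| + q \cdot q^h = q^h + q^{h+1} = q^h(q+1)$. It remains to compare this with $q^m$. Using the hypothesis $h \le m-2$ we have $q^h(q+1) \le q^{m-2}(q+1)$, and since $q \ge 2$ gives $q^2 > q+1$, we conclude $q^{m-2}(q+1) < q^{m-2} q^2 = q^m$. Hence $|B| < q^m$ and a suitable $\alpha$ exists.

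I expect the main (though modest) obstacle to be the reformulation in the second paragraph: recognizing that, although $\alpha$ enters the condition $x + \alpha y \in H \oplus \langle \alpha \rangle$ in a seemingly entangled way (it appears both in the product $\alpha y$ and inside $\langle \alpha \rangle$), fixing the $\F_q$-coefficient $c$ of $\alpha$ turns the condition into the purely linear membership $x + \alpha(y-c) \in H$, which cuts out a single coset. The degenerate subcase $y = c$ must be isolated and handled separately using $x \notin H$, and one should note that both hypotheses $h \le m-2$ and $q \ge 2$ are genuinely needed for the final inequality; indeed the statement fails when $h = m-1$, since then $H + \langle \alpha \rangle = \F_{q^m}$ for every $\alpha \notin H$.
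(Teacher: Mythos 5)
Your proof is correct and is essentially the same counting argument as the paper's: the paper parametrizes the bad set as $\mathcal{U}=\{(v-x)/(y-a) : v\in H,\ a\in\F_q,\ a\neq y\}$, bounds it by $q^{h+1}$, and then excludes $H$, which matches your decomposition into the slices $B_c$ of size $q^h$ plus $H$ and the same final inequality $q^h(q+1)<q^m$ from $h\le m-2$.
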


\begin{proof}
 Define the sets 
$U:=\{ a \in \F_q : a \neq y\}$ and $\mathcal{U}:= \{ \alpha \in
\F_{q^m} : \exists \ v \in H, \ a \in U \ \mbox{ with } \ \alpha=(v-x)/(y-a)\}$.
We claim that $x+ \alpha y \in H \oplus \langle \alpha \rangle$ if and only if
$\alpha \in \mU$. Indeed, if $\alpha \in \mU$ then $\alpha=(v-x)/(y-a)$ for some
$v \in H$ and $a \in U \subseteq \F_q$. Hence $\alpha(y-a)=v-x$, and so
$x+\alpha y = v+a\alpha \in H \oplus \langle \alpha \rangle$. Vice versa, if
$x+\alpha y \in H \oplus \langle \alpha \rangle$ then there exist $v \in H$ and
$a \in \F_q$ with $x+\alpha y=v+a\alpha$. If $a=y$ then $x=v \in H$, a
contradiction. It follows  $a \in U$, and $\alpha=(v-x)/(y-a)$.

We clearly have $|\mU| \leq |H| \cdot |U| \le q^hq=q^{h+1}$. Hence
$|\F_{q^m} \setminus \mU| \ge q^m-q^{h+1}$. Since $m-h \ge 2$ by hypothesis, we
have $ q^{m-h}-q \ge q^2-q>1$.
Multiplying both members of this inequality  by $q^h$ we obtain $q^m-q^{h+1}
>q^h$. Hence we have $|\F_{q^m} \setminus \mU| \ge q^m-q^{h+1} > q^h$. Since
$|H|=q^h$, there exists $\alpha \in (\F_{q^m} \setminus \mU) \setminus H$.
Since $\alpha \notin \mU$ we have $x+ \alpha y \notin H \oplus \langle \alpha
\rangle$ by the claim.
\end{proof}

 Given a matrix $M$ with $t$ rows $M_1,...,M_t$ and a permutation 
$\pi:[t] \to [t]$, we denote by $\pi(M)$ the matrix whose rows are
$M_{\pi(1)},...,M_{\pi(t)}$.
 A matrix $M$  is \textbf{almost}
in RRE form if $\pi(M)$ is in RRE form
for some
permutation $\pi$.

\begin{proposition} \label{induz}
Let $1 \le t <k$ be an integer, and let $M \in \mbox{Mat}(t \times k,
\F_{q^m})$
be a full-rank matrix almost in RRE form with rows
$M_1,...,M_t$. If $M_1$
has at least one
entry in $\F_{q^m} \setminus \F_q$, then there exist
$\F_q$-linearly independent elements $\alpha_1,...,\alpha_t \in \F_{q^m}$ such
that $\mbox{rk}(\sum_{i=1}^t \alpha_i M_i) \ge t+1$. 
\end{proposition}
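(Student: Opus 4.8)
The plan is to reduce the statement to a purely scalar question about the entries of a single column, and then settle that question by induction on $t$ using Lemma \ref{tecn}. To set up the reduction, I would first permute the rows of $M$ so that it is genuinely in row-reduced echelon form; this relabels the coefficients $\alpha_i$ but changes neither the value of $\sum_{i=1}^t\alpha_iM_i$ nor the $\F_q$-independence of the $\alpha_i$, and it moves the distinguished row (the one carrying an entry outside $\F_q$) to some position $i_0$. Let $j_1,\dots,j_t$ be the pivot columns; in row-reduced echelon form the pivot column $j_i$ has its unique nonzero entry, equal to $1$, in row $i$, so for any coefficients the vector $v:=\sum_{i=1}^t\alpha_iM_i$ has entry $\alpha_i$ in column $j_i$. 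Let $\ell$ be a column in which $(M_{i_0})_\ell=b\in\F_{q^m}\setminus\F_q$. Since every pivot entry equals $1\in\F_q$, the column $\ell$ is not a pivot column, and because $t<k$ there is indeed room for such a non-pivot column, so $j_1,\dots,j_t,\ell$ are $t+1$ distinct columns. Writing $c_i:=(M_i)_\ell$, so that $c_{i_0}\notin\F_q$, the entry of $v$ in column $\ell$ equals $\sum_{i=1}^t\alpha_ic_i$. Hence it suffices to produce $\F_q$-linearly independent $\alpha_1,\dots,\alpha_t\in\F_{q^m}$ with
$$\sum_{i=1}^t\alpha_ic_i\notin\mbox{Span}_{\F_q}\{\alpha_1,\dots,\alpha_t\},$$
for then $\alpha_1,\dots,\alpha_t$ together with $\sum_i\alpha_ic_i$ are $t+1$ $\F_q$-independent entries of $v$, giving $\mbox{rk}(v)\ge t+1$.

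I would then prove this scalar statement by induction on $t$, reindexing so that the distinguished index is $i_0=1$, i.e. $c_1\notin\F_q$. For $t=1$ one takes $\alpha_1=1$, and the requirement becomes $c_1\notin\F_q$, which holds. For the inductive step, suppose I already have $\F_q$-independent $\alpha_1,\dots,\alpha_{t-1}$ with $x:=\sum_{i=1}^{t-1}\alpha_ic_i\notin H$, where $H:=\mbox{Span}_{\F_q}\{\alpha_1,\dots,\alpha_{t-1}\}$ has dimension $t-1$. I want an $\alpha_t\notin H$ (to keep all the $\alpha_i$ independent) such that $x+\alpha_tc_t\notin H\oplus\langle\alpha_t\rangle=\mbox{Span}_{\F_q}\{\alpha_1,\dots,\alpha_t\}$. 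This is precisely the conclusion of Lemma \ref{tecn} applied with the given $H$, with $x$ as above, and with $y:=c_t$. Its hypotheses $1\le\dim_{\F_q}H\le m-2$ read $1\le t-1\le m-2$: the lower bound holds since $t\ge2$ in the inductive step, and the upper bound holds because $t<k\le m$ forces $t\le m-1$, hence $t-1\le m-2$. This completes the induction and, with the reduction above, the proof.

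The routine part is the pivot-column bookkeeping used to identify the entries of $v$ in the reduction. The one place to check with care is the applicability of Lemma \ref{tecn}, i.e. that $\dim_{\F_q}H=t-1$ never exceeds $m-2$; this is exactly what the hypothesis $t<k\le m$ guarantees, and the same hypothesis is what ensures the existence of the extra non-pivot column $\ell$. The conceptual heart of the argument — and the step I would flag as the main point — is the reduction itself: the rank of $\sum_i\alpha_iM_i$ can be forced up to $t+1$ by controlling only the $t$ pivot columns together with the single distinguished column $\ell$, which collapses the full matrix problem to the one-column statement that Lemma \ref{tecn} is tailored to resolve.
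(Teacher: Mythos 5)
Your proof is correct, and it rests on the same two pillars as the paper's: induction on $t$ with Lemma \ref{tecn} supplying the new coefficient $\alpha_t$ at each step. The one structural difference is where the ``witness'' column lives. You fix a single non-pivot column $\ell$ (one carrying an entry of the distinguished row outside $\F_q$) once and for all, reduce the whole proposition to the scalar claim that one can choose $\F_q$-independent $\alpha_1,\dots,\alpha_t$ with $\sum_i \alpha_i c_i \notin \mbox{Span}_{\F_q}\{\alpha_1,\dots,\alpha_t\}$ for the entries $c_i$ of that one column, and run the induction entirely on that scalar statement. The paper instead inducts on the matrix-level rank statement: its inductive hypothesis is $\mbox{rk}(\sum_{i=1}^{t-1}\alpha_i M_i)\ge t$, and the column $j$ to which Lemma \ref{tecn} is applied is chosen \emph{a posteriori} at each step, as a column whose entry of the partial sum falls outside $\langle\alpha_1,\dots,\alpha_{t-1}\rangle$ (such a column exists precisely because of the intermediate rank bound), so it may vary from step to step. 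Your version buys a cleaner separation of the combinatorial bookkeeping (pivot columns give the components $\alpha_1,\dots,\alpha_t$ of $v$ for free) from the field-theoretic content, and it dispenses with the intermediate rank bounds altogether; the paper's version avoids the initial reduction but has to re-extract a suitable column at every stage. Both hinge on the same verification that $\dim_{\F_q} H = t-1 \le m-2$, which you correctly trace back to $t < k \le m$.
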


\begin{proof}
By induction on $t$.
If $t=1$ then $M$ has only one row, $M_1 \in \F_{q^m}^k$. Such row has $1$ and an element $M_{1j} \notin \F_q$
among
its
entries. In particular, it has rank $\ge 2$, and we can take $\alpha_1:=1$ to
conclude the proof.
Assume that the result holds for all non-negative integers smaller than $t$.
Denote by $M' \in \mbox{Mat}(t-1,k, \F_{q^m})$ the matrix obtained from $M$
deleting the last row. Clearly, $M'$ has full-rank
and it is almost in RRE form.
By induction hypothesis there are $\alpha_1,...,\alpha_{t-1} \in \F_{q^m}$
independent over $\F_q$ with $\mbox{rk}(\sum_{i=1}^{t-1} \alpha_i M_i)
\ge t$. Since the vector
$\sum_{i=1}^{t-1} \alpha_i M_i$ has $\alpha_1,...,\alpha_{t-1}$
among its components, there exists $j\in [k]$ with $\sum_{i=1}^{t-1}
\alpha_i M_{ij} \notin \langle  \alpha_1,...,\alpha_{t-1} \rangle$.
 Lemma \ref{tecn} with $H=\langle  \alpha_1,...,\alpha_{t-1} \rangle$,
$x= \sum_{i=1}^{t-1}
\alpha_i M_{ij}$, $y=M_{tj}$ gives an element
$\alpha_t \in \F_{q^m} \setminus \langle  \alpha_1,...,\alpha_{t-1} \rangle$
with $\sum_{i=1}^{t-1}
\alpha_i M_{ij} + \alpha_t M_{tj}= \sum_{i=1}^t \alpha_i M_{ij} \notin  \langle
 \alpha_1,...,\alpha_{t} \rangle$. Thus
 $\sum_{i=1}^t \alpha_i M_i$ has rank $\ge t+1$.
\end{proof}

The following theorem shows that Frobenius-closed spaces coincide with optimal Gabidulin anticodes.
In particular, it shows that the algebraic condition of being
 Frobenius-closed may be regarded as a metric
condition.

\begin{theorem} \label{casino}
We have $\Lambda_q(k,m)= \mA^G_q(k,m)$.
\end{theorem}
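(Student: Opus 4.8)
The plan is to prove the set equality by a double inclusion, reading it as the equivalence ``$V\in\Lambda_q(k,m)\iff \dim_{\F_{q^m}}(V)=\mbox{maxrk}(V)$''. The zero space satisfies both conditions, so I would dispose of it first and then assume $t:=\dim_{\F_{q^m}}(V)\ge 1$. For the inclusion $\Lambda_q(k,m)\subseteq \mA^G_q(k,m)$, take $V$ Frobenius-closed and apply Theorem \ref{defsufq} to fix a basis $v_1,\dots,v_t$ of $V$ defined over $\F_q$. Any $c\in V$ is an $\F_{q^m}$-combination $c=\sum_{i=1}^t\lambda_iv_i$ with $\lambda_i\in\F_{q^m}$, so every component $c_j$ lies in $\mbox{Span}_{\F_q}\{\lambda_1,\dots,\lambda_t\}$; hence $\mbox{rk}(c)\le t$ and therefore $\mbox{maxrk}(V)\le t$. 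Combined with Proposition \ref{bound2} this forces $\mbox{maxrk}(V)=t=\dim_{\F_{q^m}}(V)$, i.e. $V\in\mA^G_q(k,m)$.

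For the reverse inclusion $\mA^G_q(k,m)\subseteq\Lambda_q(k,m)$ I would argue by contraposition, showing that a non-Frobenius-closed $V$ cannot be an optimal anticode. By Corollary \ref{proprbase}, if $V\notin\Lambda_q(k,m)$ then $M:=\mbox{RRE}(V)$ has an entry outside $\F_q$; since all pivot entries equal $1$, some non-pivot entry lies in $\F_{q^m}\setminus\F_q$, and in particular some row $M_i$ of $M$ has an entry in $\F_{q^m}\setminus\F_q$. As $\mbox{RRE}(\F_{q^m}^k)$ is the identity, this also forces $t<k$. Permuting the rows of $M$ so that $M_i$ comes first yields a matrix $M'$ that is full-rank, almost in row-reduced echelon form, and whose first row has an entry in $\F_{q^m}\setminus\F_q$. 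Proposition \ref{induz} then supplies $\F_q$-independent $\alpha_1,\dots,\alpha_t\in\F_{q^m}$ with $\mbox{rk}\left(\sum_{i=1}^t\alpha_iM'_i\right)\ge t+1$. Since $\sum_{i=1}^t\alpha_iM'_i$ is an $\F_{q^m}$-combination of the rows of $M$, it belongs to $V=\mbox{rowsp}_{\F_{q^m}}(M)$, so $\mbox{maxrk}(V)\ge t+1>t=\dim_{\F_{q^m}}(V)$ and $V\notin\mA^G_q(k,m)$, as required.

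The genuinely hard content --- producing, from a single ``irrational'' row, an element of rank strictly above the dimension --- is already isolated in Lemma \ref{tecn} and Proposition \ref{induz}, so the main obstacle has effectively been handled upstream; what remains is bookkeeping. The points calling for a little care are: checking that the extreme cases $t=0$ (the zero space) and $t=k$ (where $\mbox{RRE}(V)=I_k$ and $V=\F_{q^m}^k$ is automatically Frobenius-closed) are consistent with both memberships; verifying that a row permutation preserves full rank and the ``almost row-reduced echelon'' property, so that the hypotheses of Proposition \ref{induz} are met; and noting that the high-rank combination genuinely lies in $V$. I expect the only subtle step to be precisely this reduction from ``some entry of $M$ is irrational'' to ``a suitable row can be placed first'', i.e. the permutation argument that feeds Proposition \ref{induz}.
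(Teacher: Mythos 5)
Your proposal is correct and follows essentially the same route as the paper: the forward inclusion via Theorem \ref{defsufq} and Proposition \ref{bound2}, and the reverse inclusion by moving a row with an entry in $\F_{q^m}\setminus\F_q$ of $\mbox{RRE}(V)$ to the top and invoking Proposition \ref{induz} (the paper phrases this as a contradiction rather than a contraposition, and handles $t=k$ as a separate trivial case rather than deducing $t<k$ from the irrational entry, but these are cosmetic differences).
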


\begin{proof}
 Let
$V \in \Lambda_q(k,m)$. Denote by $t$ the dimension of $V$ over $\F_{q^m}$. If
$t=0$ then clearly $V \in \mA^G_q(k,m)$. Now assume $1 \le t \le k$. By Theorem
\ref{defsufq} there exists a basis $\{v_1,...,v_t\}$ of $V$ defined over $\F_q$.
Take any $v \in V$. There exist $\alpha_1,...,\alpha_t \in \F_{q^m}$ with
$v= \sum_{i=1}^t \alpha_i v_i$. The space generated over $\F_q$ by the
entries of $v$ is contained in $\mbox{Span}_{\F_q} \{
\alpha_1,...,\alpha_t\}$. In particular $\mbox{rk}(v) \le t$. Since
$v \in V$ is arbitrary, this
proves
$\mbox{maxrk}(V) \le t$. By Proposition \ref{bound2} we have $\mbox{maxrk}(V) =
t=\dim_{\F_{q^m}}(V)$, and so $V
\in \mA^G_q(k,m)$. Now we prove $\mA^G_q(k,m) \subseteq \Lambda_q(k,m)$. Let $A
\in \mA^G_q(k,m)$, and denote by $t$ the dimension of $A$ over $\F_{q^m}$. If
$t=0$
or $t=k$ then $A \in \Lambda_q(k,m)$. Assume $1 \le t <k$, and set
$M:=\mbox{RRE}(A)$. By Corollary \ref{proprbase} it suffices to show that $M$
has entries in $\F_q$. By contradiction, assume that $M$ has one entry, say
$M_{ij}$, in $\F_{q^m} \setminus \F_q$. Exchanging the first and the
$i$-th row of $M$ we obtain a matrix, say $N$, almost in RRE form
 such that $\mbox{rowsp}_{\F_{q^m}}(N)=\mbox{rowsp}_{\F_{q^m}}(M)=A$. By 
 Proposition \ref{induz} there exists $v \in \mbox{rowsp}_{\F_{q^m}}(N)=A$ with
 $\mbox{rk}(v) \ge t+1$, and this contradicts the fact that $A$ is an optimal
anticode of dimension $t$.
\end{proof}

We can now state the main result of this section, characterizing 
generalized rank weights in terms of optimal Gabidulin anticodes. The result
follows from
Definition \ref{rgw} and
Theorem \ref{casino}, and it may be regarded as the analogue of Theorem
\ref{cara1}
for Gabidulin codes.

\begin{corollary}\label{cara2}
Let $C \subseteq \F_{q^m}^n$ be a non-zero
Gabidulin code
of dimension $1 \le t \le k$ over $\F_{q^m}$. For all $1
\le r \le t$ we have $m_r(C) = \min \{\dim_{\F_{q^m}}(A) : A \in \mA^G_q(k,m),
\ \dim_{\F_{q^m}}(A \cap C) \ge r\}$.
\end{corollary}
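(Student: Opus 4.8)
The plan is to read the statement off as a direct substitution into the definition of the generalized rank weights, the key input being the metric characterization of Frobenius-closed spaces just established. Recall that Definition \ref{rgw} sets
$$m_r(C) = \min \{\dim_{\F_{q^m}}(V) : V \in \Lambda_q(k,m), \ \dim_{\F_{q^m}}(V \cap C) \ge r\},$$
so that the minimum ranges over the family $\Lambda_q(k,m)$ of Frobenius-closed spaces of $\F_{q^m}^k$. First I would invoke Theorem \ref{casino}, which gives the set-theoretic equality $\Lambda_q(k,m) = \mA^G_q(k,m)$ between Frobenius-closed spaces and optimal Gabidulin anticodes.

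Second, I would observe that both the quantity being minimized, $\dim_{\F_{q^m}}(V)$, and the constraint defining the feasible set, $\dim_{\F_{q^m}}(V \cap C) \ge r$, depend on $V$ only through the subspace $V$ itself, and not on whether we view $V$ as a member of $\Lambda_q(k,m)$ or of $\mA^G_q(k,m)$. Hence I may simply replace the index set $\Lambda_q(k,m)$ by $\mA^G_q(k,m)$ in the displayed minimum without altering its value, obtaining exactly
$$m_r(C) = \min \{\dim_{\F_{q^m}}(A) : A \in \mA^G_q(k,m), \ \dim_{\F_{q^m}}(A \cap C) \ge r\},$$
which is the asserted formula.

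There is essentially no obstacle here: the entire content of the corollary is absorbed by Theorem \ref{casino}, and what remains is the formal fact that minimizing a fixed objective over two equal sets yields the same value. The only point perhaps worth a sentence of care is that the feasible set on the right is nonempty for every $1 \le r \le t$ — for instance $\F_{q^m}^k$ is Frobenius-closed, hence an optimal Gabidulin anticode, and contains $C$ — so that the minimum is well defined; but this nonemptiness is already guaranteed implicitly by the well-definedness of $m_r(C)$ in Definition \ref{rgw} together with the equality $\Lambda_q(k,m) = \mA^G_q(k,m)$.
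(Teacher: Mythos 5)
Your proposal is correct and matches the paper exactly: the paper gives no separate proof, stating only that the corollary follows immediately from Definition \ref{rgw} and Theorem \ref{casino}, which is precisely the substitution argument you spell out. Your added remark on nonemptiness of the feasible set is a harmless extra observation.
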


\section{An algebraic invariant for Delsarte codes} \label{sss4}

In \cite{del1} Delsarte defines rank-metric codes as
linear spaces of matrices over a finite field.
In this section we briefly recall the basic definitions, and
propose a new algebraic invariant for Delsarte codes in analogy with the
generalized Hamming weights for linear codes and with the
generalized rank weights for Gabidulin codes.

\begin{definition} \label{variedefgab}
 A \textbf{Delsarte code}  
is an $\F_q$-subspace 
 $\mC \subseteq \mbox{Mat}(k \times m, \F_q)$.
  The \textbf{minimum rank} of a non-zero Delsarte code $\mC$ is 
 $\mbox{minrk}(\mC):= \min \{ \mbox{rk}(M) : M \in \mC, \ \mbox{rk}(M) >0\}$.
The \textbf{maximum rank} of any Delsarte code $\mC$
 is  $\mbox{maxrk}(\mC):= \max \{ \mbox{rk}(M) : M \in \mC\}$.
 \end{definition}

In analogy with Proposition \ref{bound1} and Proposition \ref{bound2} we have
the following bound.

\begin{proposition}[see \cite{io}, Proposition 47] \label{anticode}
 Let $\mC \subseteq \mbox{Mat}(k\times m,
\F_q)$ be a Delsarte code. We have
 $\dim_{\F_q}(\mC) \le m \cdot \mbox{maxrk}(\mC)$.
\end{proposition}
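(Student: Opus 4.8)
The plan is to read this statement as the rank-metric (Delsarte) anticode bound, i.e. as the exact analogue of Proposition \ref{bound1} and Proposition \ref{bound2} but with an extra factor of $m$. Concretely, writing $r:=\mbox{maxrk}(\mC)$, the assertion is that a linear space of $k\times m$ matrices over $\F_q$ in which every element has rank at most $r$ cannot have $\F_q$-dimension larger than $rm$. This is a Delsarte-metric instance of the classical phenomenon, studied by Flanders, of spaces of matrices of bounded rank; the extremal object to keep in mind is the space of all matrices supported on a fixed set of $r$ rows, which has dimension $rm$ and maximum rank $r$, showing that the bound is sharp. Unlike the Hamming and Gabidulin cases, one cannot simply take $\mbox{RRE}(\mC)$ and exhibit a single high-rank combination of rows, because the rank here is a genuinely two-sided (row and column) quantity; the argument must exploit both the row and the column structure.

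First I would normalize. Since multiplying on the left by a matrix in $\mathrm{GL}_k(\F_q)$ and on the right by a matrix in $\mathrm{GL}_m(\F_q)$ preserves the rank of every matrix and carries $\mC$ to a space of the same $\F_q$-dimension, I may pick $A\in\mC$ with $\mbox{rk}(A)=r$ and assume $A=\left(\begin{smallmatrix} I_r & 0 \\ 0 & 0\end{smallmatrix}\right)$. I then write each $M\in\mC$ in the block form induced by splitting the rows into the first $r$ and the last $k-r$, and the columns into the first $r$ and the last $m-r$. The one completely field-independent input is the following: if $M\in\mC$ has its first $r$ rows equal to zero, say $M=\left(\begin{smallmatrix} 0 & 0 \\ R & S\end{smallmatrix}\right)$, then $A+M=\left(\begin{smallmatrix} I_r & 0 \\ R & S\end{smallmatrix}\right)$ has rank exactly $r+\mbox{rk}(S)$, so membership of $A+M$ in $\mC$ forces $S=0$. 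Hence every element of $\mC$ whose top $r$ rows vanish is supported on the bottom-left $(k-r)\times r$ block.

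Next I would set up an induction on the number of rows. Projecting $\mC$ onto its last $k-r$ rows yields a space $\psi(\mC)\subseteq\mbox{Mat}((k-r)\times m,\F_q)$ with $\mbox{maxrk}(\psi(\mC))\le r$ and strictly fewer rows, to which the statement applies inductively (after treating the degenerate cases $r=0$ and $r=k$ directly), while $\ker\psi$ consists of the elements of $\mC$ supported on the top $r$ rows. The hard part, and the real content of the proposition, is the bookkeeping that prevents the top-row part and the bottom-row part from contributing independently, which would only give the far weaker bound $2rm$: one must upgrade the observation above to the full block-triangular structure, namely that the bottom-right block $S$ vanishes for \emph{every} $M\in\mC$ and not only for those with zero top rows, and then establish the compensating relation showing that the larger the bottom-left contribution is, the smaller the top block is forced to be. This compensation is precisely the mechanism of Flanders' theorem, and I expect it to be the main obstacle; since the proposition is claimed over every $\F_q$, it must be carried out in the field-robust form (the small-field subtleties of Flanders' theorem affect the classification of the extremal spaces, but the dimension bound itself survives).

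Finally, I would record one tempting shortcut and explain why it is insufficient, to justify the extra work. Using the identification $\mbox{Mat}(k\times m,\F_q)\cong\F_{q^m}^k$ of $\F_q$-spaces that groups each row of $m$ entries into one element of $\F_{q^m}$ (under which the matrix rank corresponds to the rank function $\mbox{rk}$ on $\F_{q^m}^k$), one is tempted to pass to the $\F_{q^m}$-span $\overline{\mC}=\mbox{Span}_{\F_{q^m}}(\mC)$, a Gabidulin code, and apply Proposition \ref{bound2} to get $\dim_{\F_q}(\mC)\le m\cdot\dim_{\F_{q^m}}(\overline{\mC})\le m\cdot\mbox{maxrk}(\overline{\mC})$. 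However $\mbox{maxrk}(\overline{\mC})$ can strictly exceed $\mbox{maxrk}(\mC)$, so this only yields a weaker bound; already a small $2\times 2$ example over $\F_2$ exhibits the maximum rank jumping under the $\F_{q^m}$-span. This confirms that the two-sided, Flanders-type argument sketched above is genuinely needed.
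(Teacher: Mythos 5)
First, a point of comparison: the paper does not prove this proposition at all --- it is imported verbatim from \cite{io} (Proposition 43), so there is no internal argument to measure your proposal against and it has to stand on its own. Your framing is correct: this is precisely the Flanders-type bound for linear spaces of matrices of bounded rank, the extremal example you give is the right one, and both of your negative observations are sound (the $\mbox{RRE}$ trick of Propositions \ref{bound1} and \ref{bound2} does not transfer, and passing to the $\F_{q^m}$-span only gives $\dim_{\F_q}(\mC)\le m\cdot\mbox{maxrk}(\overline{\mC})$, which is weaker since the maximum rank can jump under the span).

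However, the proposal is not a proof, because the step you yourself label ``the main obstacle'' is the entire content of the theorem and is left unestablished. After normalizing $A=\left(\begin{smallmatrix} I_r & 0\\ 0&0\end{smallmatrix}\right)\in\mC$, you verify $S=0$ only for those $M\in\mC$ whose top $r$ rows vanish; the upgrade to ``$S=0$ for every $M\in\mC$'' and the ``compensating relation'' between the bottom-left block and the top block are stated as expectations, not proved, and without them your induction only yields $\dim_{\F_q}(\mC)\le 2rm$, as you note. Moreover, the route you sketch is Flanders' original argument, whose key step (forcing $S=R(\lambda I+P)^{-1}Q$ for many scalars $\lambda$ and concluding $S=0$) genuinely requires the field to have more than $\mbox{maxrk}(\mC)$ elements; the proposition here is asserted for every $q$, including $q=2$ with $\mbox{maxrk}(\mC)$ as large as $k$, and removing the field-size hypothesis is not a matter of running the same block computation ``in a field-robust form'' --- it is a known, nontrivial strengthening (due to Meshulam) requiring different ideas, and the cited source \cite{io} avoids the structural approach altogether. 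So as it stands the proposal correctly diagnoses why the easy arguments fail but does not supply the missing one.
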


\begin{definition} \label{defanticode}
 A code $\mC \subseteq \mbox{Mat}(k\times m, \F_q)$ attaining the
bound of 
Proposition \ref{anticode}
 is  a \textbf{Delsarte optimal anticode}. We denote by $\mA^D_q(k,m)$
the set of Delsarte optimal anticodes
 in the space $\mbox{Mat}(k \times m, \F_q)$.
\end{definition}

Inspired by Theorem \ref{cara1} and
Corollary \ref{cara2}, we propose the following definition.

\begin{definition} \label{nsdef}
 Let $\mC \subseteq \mbox{Mat}(k \times m,
\F_q)$ be a non-zero Delsarte code of dimension $1 \le t \le km$.
For
 $1 \le r \le t$, the \textbf{$r$-th Delsarte generalized weight} of $\mC$ is
 $$a_r(\mC):=\frac{1}{m} \min \{\dim_{\F_q}(\mA) : \mA \in \mA^D_q(k,m), \
\dim_{\F_q}(\mA \cap
\mC) \ge r\}.$$
\end{definition}

 By Definition \ref{defanticode}, the dimension over $\F_q$ of any anticode $A
\in \mA^D_q(k,m)$ is
a multiple of $m$. Therefore Delsarte generalized weights are positive integers.

Before describing the properties of Delsarte generalized weights 
we show how our invariant relates 
to the generalized rank weights for Gabidulin codes of \cite{kmat}. Writing the
components of a
vector $v \in \F_{q^m}^k$ over a basis $\mG$
of $\F_{q^m}$ over $\F_q$ one can naturally associate to a Gabidulin code a
Delsarte code with the same metric properties.

\begin{definition}\label{assoc}
 Let $\mG= \{ \gamma_1,...,\gamma_m\}$ be a basis
of $\F_{q^m}$ over $\F_q$. The matrix \textbf{associated} to a vector
$v \in \F_{q^m}^k$ with respect to $\mG$ is the matrix 
$M_{\mG}(v) \in \mbox{Mat}(k \times m,\F_q)$ defined by
$v_i= \sum_{j=1}^m
M_{\mG}(v)_{ij}\gamma_j$ for all $i \in [k]$. The Delsarte code
\textbf{associated} to a Gabidulin code $C \subseteq \F_{q^m}^k$ with respect to
the basis $\mG$
 is $\mC_{\mG}(C):=\{ M_{\mG}(c) : c \in C\}
\subseteq
 \mbox{Mat}(k \times m, \F_q)$.
\end{definition}

 If $C \subseteq \F_{q^m}^k$ is a Gabidulin code, then for any basis $\mG$ of
$\F_{q^m}$ over $\F_q$ we have that $\mC_{\mG}(C) \subseteq
\mbox{Mat}(k\times m, \F_q)$
is a Delsarte rank-metric code with
 $\dim_{\F_q}
\mC_{\mG}(C)=m \cdot \dim_{\F_{q^m}}(C)$.
 Moreover, $\mbox{maxrk}(C)=\mbox{maxrk}(\mC_{\mG}(C))$, and if $C \neq 0$   
 we have $\mbox{minrk}(C)= \mbox{minrk}(\mC_{\mG}(C))$.

 Recall that, by definition, Gabidulin codes are $\F_{q^m}$-linear, while 
Delsarte codes are $\F_q$-linear. Therefore Gabidulin codes can be regarded as a
proper subset
of Delsarte codes. More precisely, a Delsarte code $\mC \subseteq
\mbox{Mat}(k\times m, \F_q)$ is Gabidulin if and only if it is $\F_{q^m}$-linear
under some
isomorphism $\F_{q^m}^k \to \mbox{Mat}(k\times m, \F_q)$ of the form $v \mapsto
M_{\mG}(v)$.

Since Delsarte codes generalize Gabidulin codes,
one would
expect
that Delsarte generalized weights refine, as an invariant, generalized
rank
weights. In the remainder of this section we show precisely this fact. We
start 
introducing some rank-preserving transformations.

\begin{notation} \label{r}
Given a Gabidulin code $C \subseteq \F_{q^m}^k$, a Delsarte code $\mC \subseteq
\mbox{Mat}(k \times m, \F_q)$ and matrices 
 $A \in \mbox{Mat}(k \times k, \F_q)$, $B \in \mbox{Mat}(m \times m, \F_q)$, 
define:
$$CA:= \{ c A : c \in C\}, \ \ \ \ A\mC:= \{ AM : M \in \mC\}, \ \ \ \ \mC B:=
\{ MB : M \in \mC\}.$$
 It is easy to see that if $A$ and $B$ are invertible matrices, then these 
multiplication maps are
 rank-preserving isomorphisms of Gabidulin and Delsarte codes. In particular,
they preserve optimal anticodes
 in the respective metrics,  generalized rank weights and Delsarte generalized
weights.
If $k=m$,  define the \textbf{transpose} of a Delsarte code $\mC
\subseteq \mbox{Mat}(k \times k, \F_q)$  by
 $\mC^t:= \{ M^t : M \in \mC\} \subseteq \mbox{Mat}(k \times k, \F_q)$.
It is easy to check that $\mC$ and $\mC^t$ have the same Delsarte
generalized weights.
\end{notation}

One can construct a simple family of Delsarte optimal anticodes as
follows. 
 Let $0 \le R \le k$ be an integer. The \textbf{standard
optimal anticode} $\mS_q(k,m,R)$ 
of maximum rank $R$
 is the vector space of $k \times m$ matrices over $\F_q$ whose last $k-R$ rows
equal zero. 
 The following result shows that, up to the rank-preserving transformations
introduced in Notation 
\ref{r}, 
all Delsarte optimal anticodes
are standard optimal anticodes.

\begin{theorem}[\cite{pazzis}, Theorem 4 and Theorem 6] \label{paz}
 Let $1 \le R \le k \le m$ be integers, and let $\mA \in \mA^D_q(k,m)$ with
$\mbox{maxrk}(\mA)=R$.
\begin{enumerate}
 \item If $k<m$ then there exist invertible matrices $A \in \mbox{Mat}(k \times
k,\F_q)$, $B \in
\mbox{Mat}(m \times m,\F_q)$ such that
 $A \mA B=\mS_q(k,m,R)$.
 \item If $k=m$ then  there exist invertible matrices $A,B \in \mbox{Mat}(k
\times k,\F_q)$ such that either
 $A \mA B=\mS_q(k,k,R)$, or $A \mA B=\mS_q(k,k,R)^t$.
\end{enumerate}
\end{theorem}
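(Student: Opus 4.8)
This statement is essentially the classification of extremal spaces of bounded-rank matrices (Flanders' theorem together with its refinements over arbitrary fields), so the plan is to recover it from a normal-form reduction followed by a Schur-complement rank computation. Throughout I would use Proposition \ref{anticode}, which gives $\dim_{\F_q}(\mA)=mR$ for an optimal anticode $\mA$ of maximum rank $R$, and Remark \ref{r}, which guarantees that left/right multiplication by invertible matrices over $\F_q$ preserves optimal anticodes and the maximum rank; hence I am free to replace $\mA$ by $A\mA B$ at any stage of the argument.

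First I would fix a matrix $M_0 \in \mA$ of rank exactly $R$. Multiplying on both sides by suitable invertible matrices, I may assume $M_0=\begin{pmatrix} I_R & 0 \\ 0 & 0\end{pmatrix}$, where the blocks are aligned to the decompositions $k=R+(k-R)$ of the rows and $m=R+(m-R)$ of the columns. Writing a general $M \in \mA$ in the same block form as $M=\begin{pmatrix} M_{11} & M_{12} \\ M_{21} & M_{22}\end{pmatrix}$, the goal becomes to show that, up to a further change of basis, either $M_{21}=M_{22}=0$ for all $M$ (the row form $\mS_q(m,k,R)$) or $M_{12}=M_{22}=0$ for all $M$ (the column form $\mS_q(k,k,R)^t$, which can occur only when $k=m$).

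The core step is a rank computation. For $M \in \mA$ and $\lambda \in \F_q$ the matrix $M_0+\lambda M$ lies in $\mA$, hence has rank at most $R$; since $\det(I_R+\lambda M_{11})$ is a polynomial in $\lambda$ of degree at most $R$ equal to $1$ at $\lambda=0$, the top-left block $I_R+\lambda M_{11}$ is invertible for all but at most $R$ values of $\lambda$. For such $\lambda$ the rank of $M_0+\lambda M$ equals $R$ plus the rank of the Schur complement $\lambda M_{22}-\lambda^2 M_{21}(I_R+\lambda M_{11})^{-1}M_{12}$, so this Schur complement must vanish. Clearing denominators via the adjugate turns this into a matrix polynomial identity in $\lambda$ whose constant term is $M_{22}$; comparing constant terms forces $M_{22}=0$ for every $M \in \mA$, so $\mA$ is contained in the space $\{M : M_{22}=0\}$. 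The same computation also yields the pointwise identity $M_{21}(I_R+M_{11})^{-1}M_{12}=0$, which is the obstruction I would leverage to separate the two pure forms: if some element of $\mA$ had a nonzero lower-left block while some (possibly other) element had a nonzero upper-right block, combining them with $M_0$ produces a matrix whose column space and row space both spill outside their first $R$ coordinates, forcing rank larger than $R$, a contradiction. Hence one of the families $\{M_{12} : M\in\mA\}$ or $\{M_{21} : M\in\mA\}$ vanishes identically, and the dimension count $\dim_{\F_q}(\mA)=mR$ then forces $\mA$ to fill out the corresponding compression space completely, giving the standard form (the column form being available only when $mR=kR$, i.e. $k=m$).

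The step I expect to be the genuine obstacle is making both the Schur-complement argument and the subsequent separation work over \emph{every} finite field. The polynomial-identity comparison above is clean only when $q$ is large relative to $R$, so that the set of $\lambda$ with $I_R+\lambda M_{11}$ invertible is large enough to determine the identity; over small fields such as $\F_2$ one cannot argue by generic invertibility, and a more delicate combinatorial analysis of the block structure is required to rule out correlated mixtures of the two wings. This is exactly the content of the cited results of de Seguins Pazzis (\cite{pazzis}, Theorems 4 and 6), and it is where the real work lies; the large-field statement is classical Flanders.
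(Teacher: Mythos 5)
Your proposal reaches the right destination but by a genuinely different route from the paper, so it is worth comparing the two. The paper's own proof of Theorem \ref{paz} contains no Flanders-type computation at all: it is a pure reduction to the cited classification of de Seguins Pazzis, namely dismissing the trivial cases $R=0$ and $R=k$ and then matching each remaining configuration to the correct external statement ($k<m$ to \cite{pazzis}, Theorem 6(a), up to a transposition; $k=m$ with $R>1$ to Theorem 4(a); $k=m$ with $R=1$ to Theorem 4(b) --- note that this last case requires its own clause in the reference, a distinction your sketch does not isolate). You instead sketch the classical Atkinson--Lloyd/Flanders argument (normalize a rank-$R$ element to $\bigl(\begin{smallmatrix} I_R & 0 \\ 0 & 0\end{smallmatrix}\bigr)$, kill $M_{22}$ via the Schur complement, then separate the row and column compression types). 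This buys a self-contained proof in the regime where $q$ is large relative to $R$, and you correctly identify that the arbitrary-field case is precisely the content of \cite{pazzis}; since the theorem is explicitly attributed to that reference, deferring to it there is legitimate and is in fact what the paper does for the entire statement. Two cautions on your sketch, though. First, as you acknowledge, the polynomial-identity step needs $q$ large, so your argument is not a replacement for the citation. Second, the ``separation'' step is looser than you present it even over large fields: the mere existence of one element with nonzero lower-left block and another with nonzero upper-right block does not by itself force an element of rank greater than $R$ --- mixed compression spaces $\{M : M(V)\subseteq W\}$ with $\dim W=r_1$, $\operatorname{codim} V=r_2$, $r_1+r_2=R$ realize exactly such a mixture while keeping all ranks at most $R$. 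What excludes them is their dimension $r_1 m+r_2 k-r_1r_2<mR$ (for $r_2>0$, $k\le m$), so the dimension count $\dim_{\F_q}(\mA)=mR$ from Proposition \ref{anticode} must be invoked \emph{to rule out the mixed types}, not merely at the end to show $\mA$ fills out a compression space it is already known to sit inside. With that reordering your large-field argument is sound, but the paper's version remains the shorter and fully general one.
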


\begin{proof}
 If $R=0$ or $R=k$ then the result is trivial. Assume $1 \le R \le k-1$. If
$k<m$ the result
 follows (up to a transposition) from \cite{pazzis}, Theorem 6(a). If $k=m$ and
$R>1$ then
 apply  \cite{pazzis}, Theorem 4(a). Finally, if $k=m$ and
 $R=1$  the result follows from \cite{pazzis}, Theorem 4(b).
\end{proof}

We will also need the following linear algebra result, whose
proof is left to the reader.

\begin{lemma} \label{tett}
 Let $C\subseteq \F_{q^m}^k$ be a Gabidulin code. The following hold.
\begin{enumerate}
 \item If $A \in
\mbox{Mat}(k\times k, \F_q)$
 is an invertible matrix, then for any basis $\mG$ of $\F_{q^m}$ over $\F_q$
we have
 $\mC_\mG(CA^t)=A \mC_\mG(C)$.
 In particular, $\mC_\mG(C)$ and $\mC_\mG(CA^t)$ have the same Delsarte
generalized weights.\label{te1}

\item Let $\mG=
\{\gamma_1,...,\gamma_m \}$, $\mF:= \{ \varphi_1,...,\varphi_m\}$
 be bases of $\F_{q^m}$ over $\F_q$, and let $B \in \mbox{Mat}(m \times m,
\F_q)$ denote
the invertible matrix
 defined by
 $\gamma_j =\sum_{s=1}^m B_{js}\varphi_s$ for all $j \in [m]$. We have
 $\mC_\mF(C) = \mC_\mG(C)B$.
 In particular, if $C \neq 0$ then the Delsarte generalized weights of
$\mC_\mG(C)$ do not depend on the choice of the basis $\mG$.\label{te2}

\item Let $D \subseteq \F_{q^m}^k$ be another Gabidulin code, and let $\mG$ be a
basis of
$\F_{q^m}$ over $\F_q$. We have $\mC_\mG(C \cap D)= \mC_\mG(C) \cap
\mC_\mG(D)$.\label{te3}
\end{enumerate}
\end{lemma}

We can now prove that Delsarte generalized weights refine, as an algebraic
invariant,
generalized rank weights. 

\begin{theorem} \label{finer}
 Let $C \subseteq \F_{q^m}^k$ be a non-zero
Gabidulin code of dimension $1 \le t \le k$. For any basis
 $\mG$ of $\F_{q^m}$ over $\F_q$ and for any integers $1 \le r \le t$ and $0
\le
\varepsilon \le m-1$ we have $m_r(C)= a_{rm-\varepsilon}(\mC_\mG(C))$.
 In particular, the Delsarte generalized weights of a Delsarte $\mC$ code
 arising from a Gabidulin code are fully determined by a suitable subset of
them.
\end{theorem}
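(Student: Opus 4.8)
The plan is to sandwich $m_r(C)$ between the two extreme Delsarte weights of the block $\{(r-1)m+1,\dots,rm\}$. First I would record that $s\mapsto a_s(\mC_\mG(C))$ is non-decreasing: enlarging the required intersection dimension shrinks the family of admissible anticodes in Definition \ref{nsdef}, so the minimum can only grow. Hence it suffices to prove the two inequalities $a_{rm}(\mC_\mG(C))\le m_r(C)$ and $a_{(r-1)m+1}(\mC_\mG(C))\ge m_r(C)$; together with monotonicity they force $a_s(\mC_\mG(C))=m_r(C)$ for every $s$ in the block, which is exactly $m_r(C)=a_{rm-\varepsilon}(\mC_\mG(C))$ for $0\le\varepsilon\le m-1$, and the final ``in particular'' is then immediate. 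Note that the indices $(r-1)m+1$ and $rm$ are legitimate, since $r\le t$ gives $rm\le \dim_{\F_q}\mC_\mG(C)=mt$.

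For the upper bound I would start from Corollary \ref{cara2}: choose $V\in\mA^G_q(k,m)$ with $\dim_{\F_{q^m}}(V)=m_r(C)$ and $\dim_{\F_{q^m}}(V\cap C)\ge r$. Since $V$ is an optimal Gabidulin anticode, Proposition \ref{gabtodel} shows that $\mC_\mG(V)$ has dimension $m\cdot\dim_{\F_{q^m}}(V)$ and maximum rank $\dim_{\F_{q^m}}(V)$, so it lies in $\mA^D_q(k,m)$. By Lemma \ref{te3} we have $\mC_\mG(V)\cap\mC_\mG(C)=\mC_\mG(V\cap C)$, whose $\F_q$-dimension is $m\cdot\dim_{\F_{q^m}}(V\cap C)\ge rm$. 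Feeding $\mC_\mG(V)$ into Definition \ref{nsdef} yields $a_{rm}(\mC_\mG(C))\le \tfrac1m\dim_{\F_q}\mC_\mG(V)=m_r(C)$.

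The lower bound is where the work lies. Let $\mA\in\mA^D_q(k,m)$ satisfy $\dim_{\F_q}(\mA\cap\mC_\mG(C))\ge (r-1)m+1$ and set $R:=\mbox{maxrk}(\mA)$, so $\dim_{\F_q}(\mA)=mR$; I must show $R\ge m_r(C)$. Using Theorem \ref{paz} together with Lemma \ref{te1} and Lemma \ref{te2}, I would apply invertible matrices $A,B$ to bring $\mA$ to standard form while simultaneously transporting $\mC_\mG(C)$ to $\mC_\mF(CA^t)$, still the Delsarte code of a Gabidulin code and with $m_r(CA^t)=m_r(C)$ by Remark \ref{r}. In the generic branch $A\mA B=\mS_q(m,k,R)$ one checks that $\mS_q(m,k,R)=\mC_\mF(W_0)$ with $W_0:=\langle e_1,\dots,e_R\rangle_{\F_{q^m}}\in\mA^G_q(k,m)$ of dimension $R$, so Lemma \ref{te3} gives $\mS_q(m,k,R)\cap\mC_\mF(CA^t)=\mC_\mF(W_0\cap CA^t)$; the intersection hypothesis then forces $\dim_{\F_{q^m}}(W_0\cap CA^t)\ge r$, whence $m_r(C)=m_r(CA^t)\le\dim_{\F_{q^m}}(W_0)=R$ by Corollary \ref{cara2}.

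The main obstacle is the remaining branch of Theorem \ref{paz}, which occurs only when $k=m$: there $\mA$ is equivalent to the transpose $\mS_q(k,k,R)^t$, and this anticode is \emph{not} of the form $\mC_\mG(W)$ for any Frobenius-closed $W$, since requiring the last $k-R$ columns to vanish forces every component of $v$ into a fixed $R$-dimensional $\F_q$-subspace $\tilde W\subseteq\F_{q^m}$, and $\tilde W^k$ is not an $\F_{q^m}$-subspace. One still computes $\mS_q(k,k,R)^t\cap\mC_\mG(C)=\mC_\mG(C\cap\tilde W^k)$, so the whole problem collapses to the transpose-symmetric inequality $\dim_{\F_q}(C\cap\tilde W^k)\le (r-1)m$, valid whenever $\dim_{\F_q}(\tilde W)=R<m_r(C)$. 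I expect this to be the hardest point, because the $\F_{q^m}$-span of $C\cap\tilde W^k$ need not sit inside a Frobenius-closed space of dimension $\le R$, as the case $C=\F_{q^m}^k$, $\tilde W=\F_q$ already shows. I would attack it by invoking Lemma \ref{trans}, which lets me pass to the transpose code and trade the row-space constraint for a column-space constraint, and then match the resulting $\F_q$-dimension against the Frobenius-closed (column-space) characterization of $m_r(C)$ underlying Corollary \ref{cara2} by a direct dimension count; monotonicity of $a_s$ then closes the argument.
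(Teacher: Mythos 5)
Your overall architecture -- monotonicity of $s\mapsto a_s(\mC_\mG(C))$, the upper bound $a_{rm}(\mC_\mG(C))\le m_r(C)$ obtained by pushing an optimal Gabidulin anticode through $\mC_\mG$, and the lower bound in the branch $A\mA B=\mS_q(m,k,R)$ obtained by identifying the standard anticode with $\mC_\mF(V)$ for $V=\{v:v_i=0\text{ for }i>R\}$ and taking a ceiling -- is essentially identical to the paper's proof, which uses Proposition \ref{gabtodel}, Lemmas \ref{te1}--\ref{te3}, Theorem \ref{paz} and Corollary \ref{cara2} in exactly the way you describe. Up to and including the generic branch of Theorem \ref{paz}, your proposal is correct.

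The genuine gap is the case you yourself single out: $k=m$ with $A\mA B=\mS_q(k,k,R)^t$. You correctly reduce it to the inequality $\dim_{\F_q}(C'\cap\tilde W^k)\le(r-1)m$ for every $\F_q$-subspace $\tilde W\subseteq\F_{q^m}$ with $\dim_{\F_q}(\tilde W)=R<m_r(C')$, but you then only announce that you \emph{would} attack this via Lemma \ref{trans} and ``a direct dimension count.'' That inequality is not a dimension count: it is precisely the statement that the generalized weights defined through column supports $\tilde W^k$ dominate those defined through Frobenius-closed (row-support) spaces, a nontrivial fact about $\F_{q^m}$-linear codes, and your own example ($C=\F_{q^m}^k$, $\tilde W=\F_q$) shows that the obvious route -- replacing $C\cap\tilde W^k$ by a small Frobenius-closed space containing it -- fails. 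So the hardest case of the theorem is left unproven. For comparison, the paper dispatches this branch in one line: by Lemma \ref{trans} (whose content is that transposition preserves optimal anticodes, intersection dimensions and Delsarte generalized weights), it replaces $\mC$ by $\mC^t$, $\overline{\mA}$ by $\overline{\mA}^t$, $A$ by $B^t$ and $B$ by $A^t$, landing in the first branch. Note, however, that the subsequent computation uses that the ambient Delsarte code is of the form $\mC_\mG(C)$ for a Gabidulin code $C$, a property $\mC^t$ does not visibly inherit; so the difficulty you isolate is real and is exactly what the paper's one-line reduction glosses over. Your diagnosis of where the problem lies is sound, but your proposal does not close it, and closing it requires an actual argument (either a careful justification of the transposition reduction or a direct proof of the row-versus-column support inequality), not just an appeal to Lemma \ref{trans}.
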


\begin{proof}
 Fix $1 \le r \le t$ and $0 \le \varepsilon \le m-1.$
 Let $\overline{A} \in \mA^G_q(k,m)$ with $\dim_{\F_{q^m}}(\overline{A})=m_r(C)$
and $\dim_{\F_{q^m}}(\overline{A}\cap C) \ge r$. We have $\mC_\mG(\overline{A})
 \in \mA_q^D(k,m)$ and $\dim_{\F_q}(\mC_\mG(\overline{A}))=m
\cdot \dim_{\F_{q^m}}(\overline{A})=m \cdot m_r(C)$. 
 By Lemma \ref{tett}(\ref{te3}), $\mC_\mG(\overline{A}) \cap \mC_\mG(C)=
\mC_\mG(\overline{A} \cap C)$. Hence
  we have $$\dim_{\F_q}(\mC_\mG(\overline{A}) \cap
\mC_\mG(C)) = \dim_{\F_q}(\mC_\mG(\overline{A} \cap  C)) \ge rm \ge
rm-\varepsilon.$$ It 
 follows $a_{rm-\varepsilon}(\mC_\mG(C)) \le  m_r(C)$.
 
 Now we prove $m_r(C) \le a_{rm-\varepsilon}(\mC_\mG(C))$. Define $\mC:=
\mC_\mG(C)$ to simplify the notation.
Let $\overline{\mA} \in \mA_q^D(k,m)$ with $\dim_{\F_q}(\overline{\mA}\cap \mC)
\ge rm-\varepsilon$ and 
 $a_{rm-\varepsilon}(\mC)=1/m \cdot \dim_{\F_q}(\overline{\mA})$. By Definition
\ref{defanticode}, $\dim_{\F_q}(\overline{\mA})=mR$, where
$R=\mbox{maxrk}(\overline{\mA})$.
 Hence we need to prove $m_r(C) \le R$. 
  By Theorem \ref{paz} there exist
invertible matrices $A \in \mbox{Mat}(k \times k, \F_q)$ and 
 $B \in \mbox{Mat}(m \times m, \F_q)$ such
 that either $A \overline{\mA} B= \mS_q(k,m,R)$,
 or $k=m$ and $A \overline{\mA} B= \mS_q(k,k,R)^t$.
 By Remark \ref{r} (replacing if necessary $\mC$ with $\mC^\perp$,
$\overline{\mA}$ with
 $\overline{\mA}^\perp$, $A$ with $B^t$ and $B$ with
 $A^t$) without loss of generality we may assume to
 be in the former case.
 Let $\mG= \{
\gamma_1,...,\gamma_m\}$, and for $i \in [m]$ define
 $\varphi_i:= \sum_{j=1}^m B^{-1}_{ij} \gamma_j$. It is clear that $\mF:= \{
\varphi_1,...,\varphi_m\}$ is a basis of $\F_{q^m}$ over
 $\F_q$. 
  Define  the optimal Gabidulin anticode $V:= \{ v \in  \F_{q^m}^k : v_i
=0 \mbox{ for } i >R\} \subseteq \F_{q^m}^k$.
Using Definition \ref{assoc} one can  check that $\mC_\mF(V)=
\mS_q(k,m,R)=A\overline{\mA}B$.
Since $V$ is an
 optimal Gabidulin anticode of dimension $R$ over $\F_{q^m}$, by Remark \ref{r} 
 $V (A^t)^{-1}$ is an optimal Gabidulin anticode of
 dimension $R$ as well. Hence by Corollary \ref{cara2} it suffices to prove
$\dim_{\F_{q^m}}(V (A^t)^{-1} \cap C) \ge r$.
  By  Lemma \ref{tett}(\ref{te3}) we have
 \begin{eqnarray*} \label{ee1}
  \dim_{\F_{q^m}}(V(A^t)^{-1} \cap C) &=& \dim_{\F_{q^m}}(V(A^t)^{-1}A^t \cap
CA^t) \\
  &=& \dim_{\F_{q^m}}(V \cap C A^t) \\
  &=& \frac{1}{m}  \dim_{\F_q}(\mC_\mF(V \cap CA^t)) \\
  &=& \frac{1}{m}  \dim_{\F_q}(\mC_\mF(V) \cap \mC_\mF(CA^t)).
 \end{eqnarray*}
By Lemma \ref{tett}, parts \ref{te1} and \ref{te2}, we have
$\mC_\mF(CA^t)=A\mC_\mF(C)=A \mC_\mG(C)B=A \mC B$.
It follows $$\mC_\mF(V) \cap \mC_\mF(CA^t)=A \overline{\mA}B \cap A \mC
B=A(\overline{\mA} \cap \mC)B.$$ Since $\dim_{\F_q}(A(\overline{\mA} \cap
\mC)B)=\dim_{\F_q}(\overline{\mA} \cap \mC)$, we have
$$\frac{1}{m}  \dim_{\F_q}(\mC_\mF(V) \cap \mC_\mF(CA^t))=\frac{1}{m}
\dim_{\F_q}(\overline{\mA} \cap \mC) \ge \frac{1}{m} (rm-\varepsilon).$$
It follows $\dim_{\F_{q^m}}(V(A^t)^{-1} \cap C) \ge \lceil (rm-\varepsilon)/m
\rceil=r$,
as claimed.
\end{proof}

 It is not true in general that for a Delsarte code $\mC \subseteq
\mbox{Mat}(k\times m, \F_q)$ 
 of dimension $t$ we have $a_{im}(\mC)=a_{im-\varepsilon}(\mC)$ for all $i \ge
1$ and $1 \le \varepsilon \le m-1$
 with $1 \le im-\varepsilon \le t$. 
 For example, one can produce codes $\mC \subseteq \mbox{Mat}(3 \times 3, \F_2)$
 of dimension $6$ having the Delsarte generalized weights given 
 in Table \ref{tabella1}.
 The examples reflect the fact that not all Delsarte codes $\mC$ arise from a
Gabidulin code, even when
 $\dim_{\F_q}(\mC) \equiv 0 \mod m$.

 \begin{table}[h!]
   \centering
  \begin{tabular}{l|c|c|c|c|c|c|}
   & $a_1(\mC)$ & $a_2(\mC)$ & $a_3(\mC)$ & $a_4(\mC)$ & $a_5(\mC)$ & $a_6(\mC)$
\\ 
  \hline 
  \hline
  Code \#1 & 1 & 1 & 1 & 2 & 2 & 3 \\
  \hline
  Code \#2 & 1 & 1 & 2 & 2 & 2 & 3 \\
  \hline
  Code \#3 & 1 & 1 & 1 & 2 & 3 & 3 \\
  \hline
  Code \#4 & 1 & 1 & 2 & 2 & 3 & 3 \\
  \hline
  Code \#5 & 1 & 1 & 2 & 3 & 2 & 3 \\
  \hline
  Code \#6 & 1 & 2 & 2 & 2 & 3& 3   \\
  \hline
 \end{tabular}
 \caption{Delsarte generalized weights of six different codes. Each
line corresponds to a code.}
 \label{tabella1}
  \end{table}

\section{Properties of Delsarte generalized weights} \label{sss5}

In this section we establish the analogue of Theorem \ref{propr1} and
Theorem
\ref{propr2} for Delsarte codes and Delsarte generalized weights, and
characterize optimal Delsarte codes and anticodes in terms of their
Delsarte generalized
weights.

Recall that, by Theorem 5.4 of \cite{del1}, for any non-zero Delsarte code $\mC
\subseteq \mbox{Mat}(k \times m, \F_q)$ we have 
$\dim_{\F_q} (\mC) \le m (k-\mbox{minrk}(\mC)+1)$. The code $\mC$ is 
\textbf{optimal} (or \textbf{MRD})
if its parameters attain the bound.

\begin{lemma}\label{sottoa}
Let $\mA \in \mA^D_q(k,m)$ with
$\mbox{maxrk}(\mA) \ge 1$. There
 exists $\mA' \in \mA^D_q(k,m)$ with $\mA' \subseteq \mA$ and
$\dim_{\F_q}(\mA')=\dim_{\F_q}(\mA)-m$.
\end{lemma}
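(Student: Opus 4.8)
The plan is to reduce to the standard Delsarte optimal anticode via the structure theorem, Theorem \ref{paz}, and then exhibit the desired subcode inside the standard model. Write $R := \mbox{maxrk}(\mA) \ge 1$. Since $\mA$ is an optimal anticode, Definition \ref{defanticode} gives $\dim_{\F_q}(\mA) = mR$; hence the target dimension $\dim_{\F_q}(\mA)-m$ equals $m(R-1)$, which is exactly the dimension of a Delsarte optimal anticode of maximum rank $R-1$. So it suffices to produce an optimal anticode $\mA' \subseteq \mA$ of maximum rank $R-1$.

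First I would apply Theorem \ref{paz} to obtain invertible matrices $A \in \mbox{Mat}(k \times k, \F_q)$ and $B \in \mbox{Mat}(m \times m, \F_q)$ with $A \mA B = \mS_q(m,k,R)$ (in the case $k=m$ one may instead land on $\mS_q(k,k,R)^t$, treated below). Inside the standard model the inclusion is transparent: the standard anticode $\mS_q(m,k,R-1)$, consisting of the $k \times m$ matrices over $\F_q$ whose last $k-R+1$ rows vanish, is an $\F_q$-subspace of $\mS_q(m,k,R)$, since vanishing of the last $k-R+1$ rows implies vanishing of the last $k-R$ rows. One checks directly that $\mbox{maxrk}(\mS_q(m,k,R-1))=R-1$ and $\dim_{\F_q}(\mS_q(m,k,R-1))=m(R-1)$, so by Definition \ref{defanticode} it is again a Delsarte optimal anticode, now of maximum rank $R-1$.

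Then I would pull this subcode back through the rank-preserving isomorphisms. Setting $\mA' := A^{-1}\, \mS_q(m,k,R-1)\, B^{-1}$, Remark \ref{r} guarantees that $\mA'$ is again a Delsarte optimal anticode with $\dim_{\F_q}(\mA') = m(R-1) = \dim_{\F_q}(\mA)-m$; moreover $\mA' \subseteq A^{-1}\, \mS_q(m,k,R)\, B^{-1} = \mA$, as required.

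Finally, the transpose case $k=m$ with $A \mA B = \mS_q(k,k,R)^t$ is handled identically, using the nested inclusion $\mS_q(k,k,R-1)^t \subseteq \mS_q(k,k,R)^t$ of optimal anticodes and the same pull-back; alternatively one reduces to the previous case by composing with the transposition isomorphism, which preserves optimal anticodes as recorded in the proof of Lemma \ref{trans}. I do not expect a genuine obstacle here: the entire content is furnished by Theorem \ref{paz}, and the only point requiring a moment's care is bookkeeping the two alternatives in the $k=m$ case and verifying that the nested standard anticodes carry the expected maximum ranks and dimensions.
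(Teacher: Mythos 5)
Your proof is correct and follows essentially the same route as the paper: apply Theorem \ref{paz} to normalize $\mA$ to a standard anticode (or its transpose when $k=m$), observe the nested inclusion of standard anticodes of consecutive maximum ranks, and pull back through the rank-preserving isomorphisms. The paper's own proof is a condensed version of exactly this argument.
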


\begin{proof}
 Let $R:=\mbox{maxrk}(\mA)$. By Theorem \ref{paz} there exist invertible
matrices $A$ and $B$ over $\F_q$ of size $k \times k$ and $m \times m$,
respectively,
 such that either $A \mA B = \mS_q(k,m,R)$, 
 or $k=m$ and $A \mA B = \mS_q(k,k,R)^t$. In the former case
 set $\mA':= A^{-1} \mS_q(k,m,R-1)B^{-1} \subseteq \mA$, and in the latter
case
 set $\mA':= A^{-1} \mS_q(k,k,R-1)^tB^{-1} \subseteq \mA$.
 \end{proof}

\begin{theorem} \label{propr3}
 Let $\mC \subseteq \mbox{Mat}(k\times m,
\F_q)$ be a non-zero Delsarte code of dimension $1 \le t \le km$.
The
following hold.

\begin{enumerate}
 \item $a_1(\mC)=\mbox{minrk}(\mC)$. \label{propr3_1}
 \item $a_{t}(\mC) \le k$. \label{propr3_2}
 \item For any $1 \le r \le t-1$ we have $a_r(\mC) \le a_{r+1}(\mC)$.
\label{propr3_3}
 \item For any $1 \le r \le t-m$ we have $a_r(\mC)<a_{r+m}(\mC)$.
\label{propr3_4}
 \item For any $1 \le r \le t$ we have $a_r(\mC) \le k- \lfloor (t-r)/m
\rfloor$. \label{propr3_5}
\item For any $1 \le r \le t$ we have $a_r(\mC) \ge \lceil r/m \rceil$.
\label{propr3_6}
\end{enumerate}
\end{theorem}

\begin{proof}
 We will prove the six properties separately.
 \begin{enumerate}
  \item Let $M \in \mC$ with $d:=\mbox{rk}(M)=\mbox{minrk}(\mC) \ge 1$. There
are invertible
matrices $A$ and $B$ over $\F_q$ of size $k\times k$ and $m\times m$,
respectively, such that $AMB$ is the matrix whose first $d$ diagonal entries
are ones and whose other entries equal zero. Clearly, $AMB \in \mS_q(k,m,d)$.
Set $\mA:= A^{-1} \mS_q(k,m,d) B^{-1}$. By Notation \ref{r}, $\mA$ is an
optimal Delsarte anticode of dimension $md$ such that $M \in
\mC \cap \mA$. In particular $\dim_{\F_q} (\mC \cap \mA) \ge 1$, and so
 $a_1(\mC) \le d$. Since $\mC$ has minimum rank $d$, it is clear that
$a_1(\mC) \ge d$.

\item Any anticode $\mA \in \mA_q^D(k,m)$ has dimension at most $km$.

\item Any anticode $\mA \in \mA_q^D(k,m)$ with $\dim_{\F_q}(\mA \cap \mC)
\ge r+1$ satisfies $\dim_{\F_q}(\mA \cap \mC)
\ge r$.

\item Let $\mA \in \mA^D_q(k,m)$ with $\dim_{\F_q}(\mA \cap \mC) \ge r+m$ and 
$\dim_{\F_q}(\mA)=m \cdot a_{r+m}(\mC)$. By Lemma \ref{sottoa} there exists
an optimal anticode $\mA' \subseteq \mA$ with
$\dim_{\F_q}(\mA')=\dim_{\F_q}(\mA)-m$.
It suffices to prove $\dim_{\F_q}(\mA' \cap \mC) \ge r$. Since $\mA' \subseteq
\mA$, we have $\mA' \cap \mC = \mA' \cap (\mA \cap \mC)$. Hence
$\dim_{\F_q}(\mA' \cap \mC)=\dim_{\F_q}(\mA' \cap (\mA \cap \mC))=
\dim_{\F_q}(\mA') + \dim_{\F_q}(\mA \cap \mC) -\dim_{\F_q}(\mA' + (\mA \cap
\mC))$.
Since $\mA' + (\mA \cap \mC) \subseteq \mA$, we have $\dim_{\F_q}(\mA' + (\mA
\cap \mC)) \le \dim_{\F_q}(\mA)$. As a consequence,
$\dim_{\F_q}(\mA' \cap \mC) \ge \dim_{\F_q}(\mA') + \dim_{\F_q}(\mA \cap \mC) 
- \dim_{\F_q}(\mA)= \dim_{\F_q}(\mA \cap \mC)-m \ge r$.

\item Define $h:= \lfloor (t-r)/m \rfloor$. By part $(2)$ and $(4)$ we find a
strictly increasing sequence of integers $a_r(\mC) < a_{r+m}(\mC) < \cdots <
a_{r+hm}(\mC) \le k$.
It follows $k \ge a_r+h$, i.e., $a_r \le k-h$.

\item If $\mA \in \mA_q^D(k,m)$ satisfies $\dim_{\F_q}(\mA \cap \mC) \ge r$
then, in particular,
$\dim_{\F_q}(\mA) \ge r$. Hence we have $a_r(\mC) \ge r/m$, i.e.,
$a_r(\mC) \ge
\lceil r/m \rceil$. \qedhere
 \end{enumerate}
\end{proof}

We can now show that Delsarte generalized weights completely characterize 
optimal codes and anticodes.

\begin{corollary} \label{optcode}
 Let $\mC \subseteq
\mbox{Mat}(k\times m, \F_q)$ be a non-zero Delsarte code with
$\dim_{\F_q}(\mC)=mR$.
The following are equivalent.
\begin{enumerate}
 \item $\mC$ is a Delsarte optimal code,
 \item $a_1(\mC)=k-R+1$,
 \item for all $r \in [mR]$ we have $a_r(\mC)=k-R+ \lceil r/m \rceil$.
\end{enumerate}
In particular, the Delsarte generalized
weights of a Delsarte optimal code $\mC \subseteq \mbox{Mat}(k\times m, \F_q)$
only
depend on $k$, $m$ and $\mbox{minrk}(\mC)$.
\end{corollary}

\begin{proof}
 By Theorem \ref{propr3}, $(1)$ and $(2)$ are equivalent.
 Assume $a_1(\mC)=k-R+1$. By Theorem \ref{propr3}, for all
$r \in [mR]$ we have
 $a_r(\mC) \le k - \lfloor (mR-r)/m \rfloor = k-R+ \lceil r/m \rceil$.
 Assume by contradiction that there exists $r \in [mR]$ with
 $a_r(\mC) < k-R+ \lceil r/m \rceil$.
 Define the non-negative integer $s:= \max \{ i \in \N : r-im \ge 1\}$.  We have
 $1 \le r-sm \le m$. In particular, $s \ge (r-m)/m=r/m-1$. Hence $s \ge \lceil
r/m \rceil-1$.
 By Theorem \ref{propr3} we have
\begin{eqnarray*}
 k-R+1 = a_1(\mC) &\le& a_{1+sm}(\mC)-s \\
 &\le& a_r(\mC)-s \\
 &<& k-R+\lceil r/m \rceil-s \\
 &\le& k-R+\lceil r/m \rceil- \lceil r/m \rceil +1 \\
 &=&k-R+1,
\end{eqnarray*}
a contradiction. Therefore we have $a_r(\mC) =k-R+ \lceil r/m \rceil$ for all $r
\in [mR]$. This proves
$(2) \Rightarrow (3)$. Finally, it is clear that $(3)$ implies $(2)$.
\end{proof}

\begin{corollary} \label{optanti}
 Let $\mC \subseteq
\mbox{Mat}(k\times m, \F_q)$ be a Delsarte code with $\dim_{\F_q}(\mC)=mR$.
The following are equivalent.
\begin{enumerate}
 \item $\mC$ is a Delsarte optimal anticode,
 \item $a_{mR}(\mC)=R$,
 \item for all $r \in [mR]$ we have $a_r(\mC)=\lceil r/m \rceil$.
\end{enumerate}
In particular, the Delsarte generalized
weights of a Delsarte optimal anticode $\mC \subseteq \mbox{Mat}(k\times m,
\F_q)$ only
depend on $k$, $m$ and $\mbox{maxrk}(\mC)$.
\end{corollary}

\begin{proof}
 Assume that $\mC$ is an optimal anticode. By Theorem \ref{propr3}, for all $r
\in [mR]$ we have
 $a_r(\mC) \ge \lceil r/m \rceil$. Let $r \in [mR]$. Since $\lceil r/m \rceil
\le \lceil mR/m \rceil=R$, by iterating Lemma \ref{sottoa} we can find an
optimal anticode $\mA \subseteq \mC$ with $\dim_{\F_q}(\mA)=m \lceil r/m
\rceil$. We have $\dim_{\F_q}(\mA \cap \mC)=\dim_{\F_q}(\mA)=m \lceil r/m
\rceil$, and so $a_r(\mC) \le \lceil r/m \rceil$. This proves
$(1) \Rightarrow (3)$. It is clear that $(3)$ implies $(2)$.
Let us prove $(2) \Rightarrow (1)$. Assume
$a_{mR}(\mC)=R$. By definition, there exists an optimal anticode
$\mA \in \mA_q^D(k,m)$ such that $\dim_{\F_q}(\mA)=mR$ and $\dim_{\F_q}(\mA
\cap \mC) \ge mR$. Since $\dim_{\F_q}(\mC)=mR$, we have $\mA=\mC$.
In particular, $\mC \in \mA_q^D(k,m)$.
\end{proof}

\section{Delsarte generalized weights and duality} \label{sss6}

In this section we give the definition of Delsarte dual code, and show 
that the Delsarte generalized weights of a code and the 
 Delsarte generalized weights of the dual code determine each others.
 We first recall the 
 analogous definitions and results for linear and Gabidulin codes.

The \textbf{dual} of a linear code $C\subseteq \F_q^n$
is   $C^\perp:= \{ v \in \F_q^n : \langle c, v \rangle =0 \mbox{ for all } c \in
C\}\subseteq \F_q^n$, where
  $\langle \cdot , \cdot \rangle$ is the standard inner product of $\F_q^n$.
The \textbf{dual} of a Gabidulin code $C$ is the Gabidulin code 
   $C^\perp:= \{ v \in \F_{q^m}^k : \langle c, v \rangle =0 \mbox{ for all } c
\in C\}\subseteq \F_{q^m}^k$, where
  $\langle \cdot , \cdot \rangle$ is the standard inner product of $\F_{q^m}^k$.

\begin{theorem}[\cite{wei}, Theorem 3, and \cite{ducoat}] \label{vecchidual}
 The following hold.
 \begin{enumerate}
  \item Let $C \subseteq \F_q^n$ be a linear code of dimension $1 \le t <
n$ over $\F_q$. The
  generalized Hamming weights of $C$ and $C^\perp$ determine each
other.
  \item Let $C \subseteq \F_{q^m}^k$ be a Gabidulin code of dimension $1 \le t <
k$ over $\F_{q^m}^k$. The
  generalized rank weights of $C$ and $C^\perp$ determine each other.
  \end{enumerate}
\end{theorem}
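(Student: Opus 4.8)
The plan is to prove both parts with a single combinatorial template: reformulate each weight hierarchy through a counting function that records how many generalized weights lie at or below a given threshold, establish a dimension identity expressing this function for $C$ in terms of the corresponding function for $C^\perp$, and then read off the duality from monotonicity. In each case the outcome will be that the two hierarchies form complementary subsets of $[n]$ (resp. $[k]$), which in particular shows that either one determines the other.

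For part $(1)$, for $I \subseteq [n]$ write $C(I) := \{ c \in C : c_i = 0 \text{ for all } i \notin I\}$ for the subcode supported on $I$, and set $\rho_C(i) := \max \{ \dim_{\F_q} C(I) : |I| = i\}$. Since any $r$-dimensional subcode $D$ lies in $C(\chi(D))$, one checks that $d_r(C) = \min\{ i : \rho_C(i) \ge r\}$, so $\rho_C(i)$ counts the generalized Hamming weights of $C$ that are at most $i$; in particular $i$ is such a weight exactly when $\rho_C(i) > \rho_C(i-1)$. The heart of the argument is the shortening--puncturing identity $\dim_{\F_q} C(I) = |I| - (n-t) + \dim_{\F_q} C^\perp(I^c)$, where $t = \dim_{\F_q} C$, obtained by identifying $C(I)$ inside $\F_q^I$ with the orthogonal complement of the projection of $C^\perp$ onto the coordinates in $I$. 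Maximizing over $|I| = i$ (equivalently over $|I^c| = n-i$) yields
$$\rho_C(i) = i - (n-t) + \rho_{C^\perp}(n-i).$$
Taking successive differences gives $\rho_C(i) - \rho_C(i-1) = 1 - \big( \rho_{C^\perp}(n-i+1) - \rho_{C^\perp}(n-i) \big)$. By monotonicity (Theorem \ref{propr1}, part $(3)$) each difference lies in $\{0,1\}$, so this relation forces $i$ to be a generalized Hamming weight of $C$ precisely when $n+1-i$ is not a generalized Hamming weight of $C^\perp$, exhibiting the two hierarchies as complementary.

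For part $(2)$ I would run the identical scheme with Frobenius-closed spaces in place of coordinate subsets. Setting $\lambda_C(j) := \max\{ \dim_{\F_{q^m}}(V \cap C) : V \in \Lambda_q(k,m), \ \dim_{\F_{q^m}} V = j\}$, Definition \ref{rgw} gives $m_r(C) = \min\{ j : \lambda_C(j) \ge r\}$, so $\lambda_C$ again counts the generalized rank weights below a threshold. The replacement for the shortening--puncturing identity is the elementary relation $\dim(V \cap C) = \dim V + \dim C - k + \dim(V^\perp \cap C^\perp)$, which follows from $(V + C)^\perp = V^\perp \cap C^\perp$. The one genuinely metric input is that $V \mapsto V^\perp$ restricts to a dimension-reversing bijection of $\Lambda_q(k,m)$: since by Theorem \ref{defsufq} a space is Frobenius-closed iff it admits a basis over $\F_q$, and the standard inner product is defined over $\F_q$, the dual $V^\perp$ again has a basis over $\F_q$. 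Maximizing over $V \in \Lambda_q(k,m)$ of dimension $j$ thus amounts to maximizing over $V^\perp \in \Lambda_q(k,m)$ of dimension $k-j$, giving $\lambda_C(j) = j + t - k + \lambda_{C^\perp}(k-j)$ with $t = \dim_{\F_{q^m}} C$. The same difference computation, now using monotonicity of generalized rank weights (Theorem \ref{propr2}, part $(3)$), shows that $j$ is a generalized rank weight of $C$ iff $k+1-j$ is not one of $C^\perp$.

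The main obstacle is not the combinatorics, which is forced once the counting functions are in place, but the two structural identities: the shortening--puncturing duality in the Hamming case and, in the rank case, the stability of $\Lambda_q(k,m)$ under orthogonality together with the intersection--dimension formula. I expect the rank case to demand the most care, precisely because one must check that passing to duals keeps us inside the class of Frobenius-closed spaces so that the bijection $V \mapsto V^\perp$ is legitimate; this is exactly where Theorem \ref{defsufq} enters.
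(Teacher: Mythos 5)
Your proof is correct, but note that the paper does not actually prove Theorem \ref{vecchidual}: it is quoted as known, with part (1) attributed to Wei and part (2) to Ducoat, so there is no in-paper proof to match. What you have written is a sound reconstruction of the classical argument. In part (1) the counting function $\rho_C$, the shortening--puncturing identity $\dim_{\F_q} C(I) = |I| - (n-t) + \dim_{\F_q} C^\perp(I^c)$, and the telescoping of $\{0,1\}$-valued differences (justified by strict monotonicity, Theorem \ref{propr1}(3)) give exactly Wei's complementarity statement. In part (2) the two inputs you isolate are the right ones: the dimension formula $\dim(V \cap C) = \dim V + \dim C - k + \dim(V^\perp \cap C^\perp)$ and the fact that $V \mapsto V^\perp$ preserves $\Lambda_q(k,m)$, which indeed follows from Theorem \ref{defsufq} since the orthogonal complement of a space with a basis over $\F_q$ again has a basis over $\F_q$; combined with strict monotonicity (Theorem \ref{propr2}(3)) this yields Ducoat's complementarity. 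It is worth observing that your clean difference argument does \emph{not} transfer to the paper's own duality result for Delsarte codes (Theorem \ref{diverso} and Corollary \ref{determinano}): there the monotonicity $a_r(\mC) \le a_{r+1}(\mC)$ is only weak, with strictness guaranteed only in steps of $m$, so the consecutive differences of the analogous counting function need not lie in $\{0,1\}$, and the paper instead argues by a case analysis within each residue class modulo $m$. Only minor points are left implicit in your write-up: that $\rho_C$ and $\lambda_C$ are non-decreasing (needed to identify $\min\{j : \lambda_C(j) \ge r\}$ with $m_r(C)$ and to read off weights from jumps), both of which are immediate since any set or Frobenius-closed space of a given size extends to one of the next size.
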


The goal of this section is to establish the analogue of Theorem
\ref{vecchidual} for Delsarte codes and
Delsarte generalized weights. We will use the notion of duality
in $\mbox{Mat}(k \times m, \F_q)$ proposed in the context of
coding theory by Delsarte in \cite{del1}.
Recall that the \textbf{trace-product} of matrices $M,N \in \mbox{Mat}(k\times
m, \F_q)$ is  
 $\langle M , N \rangle := \mbox{Tr}(MN^t)$.
 One can easily check that the map
$\langle \cdot , \cdot \rangle: \mbox{Mat}(k \times m, \F_q) \times \mbox{Mat}(k
\times m, \F_q) \to \F_q$
is symmetric, bilinear and non-degenerate.

\begin{definition} \label{dualdel}
 Let $\mC \subseteq \mbox{Mat}(k \times m,
\F_q)$ be a Delsarte code. The \textbf{dual} of $\mC$
 is the Delsarte code  
$\mC^\perp:= \{ N \in \mbox{Mat}(k \times m, \F_q) : \langle M, N \rangle =0 
  \mbox{ for all } M \in \mC\}\subseteq \mbox{Mat}(k \times m, \F_q)$.
\end{definition}

The following lemma summarizes some well-known properties of the
dual code.
The proof is left to the reader.

\begin{lemma}\label{proprduale}
 Let $\mC,\mD \subseteq \mbox{Mat}(k \times m, \F_q)$ be $\F_q$-subspaces. We
have
$$(\mC^\perp)^\perp=\mC, \ \ \ \ \ \dim_{\F_q}(\mC^\perp)= km-\dim_{\F_q}(\mC),
\ \ \ \ \ 
(\mC \cap \mD)^\perp=\mC^\perp + \mD^\perp.$$
\end{lemma}

A crucial property of the set of Delsarte optimal anticodes is that it is
preserved by duality.

\begin{theorem}[\cite{io}, Theorem 54] \label{dan}
 Let $\mA \in \mbox{Mat}(k \times m, \F_q)$ be a Delsarte code. We have
 $\mA \in \mA_q^D(k \times m)$ if and only if $\mA^\perp \in \mA_q^D(k
\times m)$.
\end{theorem}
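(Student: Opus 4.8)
The plan is to reduce to the standard optimal anticodes of Definition~\ref{dualdel}'s ambient setting via the structure Theorem~\ref{paz}, and then to compute their duals explicitly. First I would record how the trace-product interacts with two-sided multiplication and with transposition. Using the cyclic invariance of the trace together with $BN^tA=(A^tNB^t)^t$, one checks that $\mbox{Tr}((AMB)N^t)=\langle M,\,A^tNB^t\rangle$, whence for invertible $A\in\mbox{Mat}(k\times k,\F_q)$ and $B\in\mbox{Mat}(m\times m,\F_q)$
$$(A\mC B)^\perp=(A^t)^{-1}\,\mC^\perp\,(B^t)^{-1}.$$
Similarly, from $\langle M^t,N\rangle=\langle M,N^t\rangle$ one obtains $(\mC^t)^\perp=(\mC^\perp)^t$. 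Since multiplication by invertible matrices and transposition are rank- and dimension-preserving (Remark~\ref{r}), each of these operations sends Delsarte optimal anticodes to Delsarte optimal anticodes.

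The core computation is the dual of a standard anticode. I would show directly from Definition~\ref{dualdel} that $\mS_q(m,k,R)^\perp$ is exactly the space of $k\times m$ matrices over $\F_q$ whose \emph{first} $R$ rows vanish: a matrix $N$ is orthogonal to $\mS_q(m,k,R)$ precisely when $N_{ij}=0$ for all $i\le R$ and $j\in[m]$, because the first $R$ rows of the matrices in $\mS_q(m,k,R)$ range over all of $\F_q^m$. This orthogonal space has dimension $(k-R)m$ and maximum rank $k-R$, so it attains the bound of Proposition~\ref{anticode} and is itself a Delsarte optimal anticode; indeed a row permutation (left multiplication by an invertible permutation matrix) identifies it with $\mS_q(m,k,k-R)$.

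With these ingredients the theorem follows by a short reduction. Assume $\mA\in\mA^D_q(k,m)$ and put $R:=\mbox{maxrk}(\mA)$. By Theorem~\ref{paz} there are invertible $A,B$ with $A\mA B=\mS_q(m,k,R)$, or, when $k=m$, possibly $A\mA B=\mS_q(k,k,R)^t$. In the first case, dualizing and applying the multiplication--duality identity gives $(A^t)^{-1}\mA^\perp(B^t)^{-1}=\mS_q(m,k,R)^\perp$, which is an optimal anticode by the previous paragraph; since the invertible multiplications preserve the optimal-anticode property, $\mA^\perp$ is an optimal anticode as well. In the transpose case I would additionally invoke $(\mC^t)^\perp=(\mC^\perp)^t$ and the fact that transposition preserves optimal anticodes to reach the same conclusion. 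The converse is immediate upon applying the forward direction to $\mA^\perp$ and using $(\mA^\perp)^\perp=\mA$ from Lemma~\ref{proprduale}.

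I expect the main obstacle to be bookkeeping rather than mathematical depth: pinning down the transpose/inverse decorations in the identity $(A\mC B)^\perp=(A^t)^{-1}\mC^\perp(B^t)^{-1}$, and cleanly disposing of the extra transposed alternative when $k=m$. Once that identity and the description of $\mS_q(m,k,R)^\perp$ are established, the equivalence is a two-line reduction via Theorem~\ref{paz}.
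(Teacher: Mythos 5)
Your argument is correct, but there is nothing in this paper to compare it against: Theorem \ref{dan} is imported verbatim from the reference \cite{io} (Theorem 50 there) and is not proved here. Judged on its own, your proof is sound. The identity $\mbox{Tr}((AMB)N^t)=\langle M, A^tNB^t\rangle$ and the consequence $(A\mC B)^\perp=(A^t)^{-1}\mC^\perp(B^t)^{-1}$ are verified correctly, as is $(\mC^t)^\perp=(\mC^\perp)^t$ in the square case; the computation $\langle M,N\rangle=\sum_{i,j}M_{ij}N_{ij}$ immediately gives that $\mS_q(m,k,R)^\perp$ is the space of matrices whose first $R$ rows vanish, which has dimension $(k-R)m$ and maximum rank $k-R$ and hence meets the bound of Proposition \ref{anticode}. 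The reduction via Theorem \ref{paz} and the converse via $(\mA^\perp)^\perp=\mA$ then close the argument (the degenerate cases $R=0$ and $R=k$, where Theorem \ref{paz} is not needed, are trivial and worth a sentence). Your route is in the same spirit as the rest of this paper, which repeatedly uses Theorem \ref{paz} to reduce optimal anticodes to standard ones (e.g.\ Lemma \ref{sottoa} and the proof of Theorem \ref{finer}); by contrast, the cited source establishes the statement through the MacWilliams-type duality of rank distributions, which avoids invoking the de Seguins Pazzis classification but is less explicit. What your approach buys is a concrete description of $\mA^\perp$ for every optimal anticode $\mA$; what it costs is dependence on the nontrivial structure theorem \ref{paz}.
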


The theorem that we now present describes how the Delsarte generalized weights
of a code $\mC$ relate to the
Delsarte generalized weights of the dual code $\mC^\perp$. We will obtain as a
corollary the main result
of this section.

\begin{theorem} \label{diverso}
 Let $\mC \subseteq \mbox{Mat}(k \times m, \F_q)$
be a Delsarte code of dimension $1 \le t \le km-1$. Assume that
$p,i,j \in \Z$ satisfy:
$$1 \le p+im \le km-t \ \ \ \ \mbox{ and } \ \ \ \ 1 \le p+t+jm \le t.$$
Then $a_{p+im}(\mC^\perp) \neq k+1-a_{p+t+jm}(\mC)$.
\end{theorem}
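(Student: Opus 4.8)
The plan is to reduce the statement to an elementary divisibility contradiction by passing through an auxiliary intersection function and a duality identity for it. For a Delsarte code $\mD \subseteq \mbox{Mat}(k \times m, \F_q)$ and an integer $0 \le R \le k$, I would set
$$\phi_\mD(R) := \max\{ \dim_{\F_q}(\mA \cap \mD) : \mA \in \mA^D_q(k,m), \ \dim_{\F_q}(\mA) = mR \}.$$
Since every optimal anticode has dimension a multiple of $m$ (Definition \ref{defanticode}), Definition \ref{nsdef} gives the Galois-type equivalence $a_r(\mD) \le R \iff \phi_\mD(R) \ge r$, equivalently $a_r(\mD) = \min\{R : \phi_\mD(R) \ge r\}$. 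Moreover $\phi_\mD$ is non-decreasing, because by Theorem \ref{paz} a maximum-rank-$R$ optimal anticode is contained in a maximum-rank-$(R+1)$ one (the converse of Lemma \ref{sottoa}). Note also that $1 \le t \le km-1$ ensures both $\mC$ and $\mC^\perp$ are non-zero, so all weights below are defined.

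The key ingredient is a duality identity for $\phi$. Fixing $\mA \in \mA^D_q(k,m)$ of maximum rank $R$, I would combine $(\mA \cap \mC)^\perp = \mA^\perp + \mC^\perp$ from Lemma \ref{proprduale}(3) with the dimension formula of Lemma \ref{proprduale}(2) and the Grassmann formula; a short computation yields
$$\dim_{\F_q}(\mA^\perp \cap \mC^\perp) = \dim_{\F_q}(\mA \cap \mC) + m(k-R) - t.$$
By Theorem \ref{dan} the map $\mA \mapsto \mA^\perp$ is an involution of $\mA^D_q(k,m)$, and it sends maximum rank $R$ to maximum rank $k-R$; taking the maximum over all such $\mA$ therefore upgrades the pointwise identity to
$$\phi_{\mC^\perp}(k-R) = \phi_\mC(R) + m(k-R) - t,$$
which also holds at the boundary ranks, since $\phi_\mD(0) = 0$ and $\phi_\mD(k) = \dim_{\F_q}(\mD)$.

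With the identity available I would argue by contradiction. Writing $u := p+im$ and $v := p+t+jm$, hypotheses (1)--(2) become $1 \le u \le km-t$ and $1 \le v \le t$, so $a_u(\mC^\perp)$ and $a_v(\mC)$ lie in $[1,k]$ by Theorem \ref{propr3}. Suppose $a_u(\mC^\perp) + a_v(\mC) = k+1$ and set $R := a_v(\mC)$, $S := a_u(\mC^\perp)$, so $R+S = k+1$ and $k-S = R-1$. The minimality defining $a_v$ and $a_u$ gives $\phi_\mC(R) \ge v$, $\phi_\mC(R-1) \le v-1$, $\phi_{\mC^\perp}(S) \ge u$ and $\phi_{\mC^\perp}(S-1) \le u-1$; rewriting the last two through the duality identity (using $k-S = R-1$ and $k-S+1 = R$) turns them into $\phi_\mC(R-1) \ge u+t-mS$ and $\phi_\mC(R) \le u+t-mS+m-1$. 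Chaining these with the first two pins $v$ into the window
$$u + t - mS + 1 \le v \le u + t - mS + m - 1.$$

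The last step is the divisibility punchline, and it is the only point where the precise shape of the indices — not merely their ranges — is used. From $u = p+im$ and $v = p+t+jm$ one computes $v - (u + t - mS) = (v-u-t) + mS = m(j-i+S)$, an integer multiple of $m$; but the window above forces $v - (u+t-mS)$ to lie strictly between $0$ and $m$, which no nonzero multiple of $m$ can do, giving the contradiction. I expect the main obstacle to be the careful derivation of the $\phi$-duality identity and the justification that the involution $\mA \mapsto \mA^\perp$ lets one pass from the fixed-anticode identity to the identity for the maxima (including the boundary ranks $R \in \{0,k\}$); once that identity is secured, the reduction to ``a multiple of $m$ trapped in $(0,m)$'' is forced and purely arithmetic.
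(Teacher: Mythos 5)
Your proof is correct, and it takes a genuinely different route from the paper's, even though it is built from the same raw ingredients (Lemma \ref{proprduale}, Theorem \ref{dan}, and the Grassmann formula). The paper fixes $r=p+im$, sets $s=t+r-m\cdot a_r(\mC^\perp)$, and splits into the two cases $p+t+jm\le s$ and $p+t+jm>s$; in the first case it derives the strict inequality $a_r(\mC^\perp)<k+1-a_{p+t+jm}(\mC)$ directly, and in the second it assumes equality and manufactures a violation of the monotonicity properties of Theorem \ref{propr3} via an auxiliary parameter $\varepsilon$. You instead package the one computation both cases share, namely $\dim_{\F_q}(\mA^\perp\cap\mC^\perp)=\dim_{\F_q}(\mA\cap\mC)+m(k-R)-t$, into the clean duality identity $\phi_{\mC^\perp}(k-R)=\phi_\mC(R)+m(k-R)-t$ for the intersection profile $\phi$, and then the four defining inequalities for $a_v(\mC)$ and $a_u(\mC^\perp)$ trap $v-(u+t-mS)$ in the open interval $(0,m)$ while the arithmetic shape of the indices forces it to be a multiple of $m$. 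This buys you a single uniform argument with no case split and no appeal to the monotonicity statements (3)--(4) of Theorem \ref{propr3} beyond the definitions themselves; it is closer in spirit to Wei's original duality proof for the Hamming metric, and the identity for $\phi$ is of independent interest. The only points that need the care you already flag are (a) that $\mA\mapsto\mA^\perp$ is a bijection between optimal anticodes of maximum rank $R$ and those of maximum rank $k-R$, which follows from Theorem \ref{dan} together with $\dim_{\F_q}(\mA^\perp)=m(k-R)$, and (b) the boundary values $\phi_\mD(0)=0$ and $\phi_\mD(k)=\dim_{\F_q}(\mD)$, both of which you verify; the monotonicity of $\phi$ via Theorem \ref{paz} is needed only for your Galois reformulation and not for the contradiction itself.
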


\begin{proof}
 Define $r:=p+im$ and $s:=t+r-m\cdot a_r(\mC^\perp)$. By Theorem \ref{propr3} we
have 
$a_r(\mC^\perp) \ge r/m$, and so $s \le t$.
 We split
the proof into
two parts. All dimensions  are over $\F_q$.
\begin{enumerate}
 \item Assume $p+t+jm \le s$. Since $p+t+jm \ge
1$, we have
 $1 \le p+t+jm \le s \le t$. Let $\mA \in
\mA^D_q(k,m)$ with 
$\dim(\mA \cap \mC^\perp) \ge r$ and $\dim(\mA)=m \cdot a_r(\mC^\perp)$.
By Lemma \ref{proprduale} we have
\begin{eqnarray*}
 r \le \dim(\mA \cap \mC^\perp) &=& \dim(\mA)+\dim(\mC^\perp)-\dim(\mA +
\mC^\perp) \\
&=& m \cdot a_r(\mC^\perp)+(km-t)-(km- \dim(\mA^\perp \cap \mC)) \\
&=& m \cdot a_r(\mC^\perp) -t + \dim(\mA^\perp \cap \mC).
\end{eqnarray*}
This implies $s=t+r-m\cdot a_r(\mC^\perp) \le \dim(\mA^\perp \cap \mC)$.
Therefore by
Theorem \ref{dan} we have
$a_s(\mC) \le \dim(\mA^\perp)/m=(km-\dim(\mA))/m=(km-m \cdot
a_r(\mC^\perp))/m=k-a_r(\mC^\perp)$,
i.e., $a_r(\mC^\perp) \le k-a_s(\mC)$. Since $p+t+jm \le s$, by Theorem
\ref{propr3} we have 
$a_s(\mC) \ge
a_{p+t+jm}(\mC)$. As a consequence,
$a_r(\mC^\perp) \le k-a_s(\mC) \le k-a_{p+t+jm}(\mC)< k+1-a_{p+t+jm}(\mC)$, and
the result follows.

\item Now assume $p+t+jm > s$, i.e., $i-j < a_r(\mC^\perp)$. Let 
$\varepsilon >0$ with $i-j=a_r(\mC^\perp)-\varepsilon$.  By definition
of $r$ we have
\begin{eqnarray*}
 p+t+jm &=& r-im+t+jm \\
&=& r-(i-j)m+t \\
&=& r-(a_r(\mC^\perp)-\varepsilon)m+t \\
&=& t+r-m \cdot a_r(\mC^\perp)+\varepsilon m \\
&=& s+\varepsilon m.
\end{eqnarray*}
Assume by contradiction
$a_r(\mC^\perp)=k+1-a_{p+t+jm}(\mC)$, i.e., $a_r(\mC^\perp)=k+1-a_{s+\varepsilon
m}(\mC)$.
Let $\mA \in \mA^D_q(k,m)$ with $\dim(\mA \cap \mC) \ge s+\varepsilon m$ and
$\dim(\mA)=m \cdot a_{s+\varepsilon m}(\mC)=m(k+1-a_r(\mC^\perp))$.
By Lemma \ref{proprduale} we have
\begin{eqnarray*}
 s+\varepsilon m &\le& \dim(\mA \cap \mC) \\
&=& \dim(\mA)+\dim(\mC)-\dim(\mA+\mC) \\
&=& m(k+1-a_r(\mC^\perp)) + t -(km-\dim(\mA^\perp \cap \mC^\perp)) \\
&=& m-m \cdot a_r(\mC^\perp)+t+\dim(\mA^\perp \cap \mC^\perp).
\end{eqnarray*}
Since $s=t+r-m\cdot a_r(\mC^\perp)$, the inequality above can be re-written
as  $\dim(\mA^\perp \cap \mC^\perp) \ge r+\varepsilon m -m$.
By Theorem \ref{dan}, $\mA^\perp \in \mA^D_q(k,m)$, and so
$m \cdot a_{r+\varepsilon m -m}(\mC^\perp) \le \dim(\mA^\perp)$. On the other
hand, by
Lemma \ref{proprduale} 
we have
$$\dim(\mA^\perp)=km-\dim(\mA)=km-m(k+1-a_r(\mC^\perp))=m(a_r(\mC^\perp)-1).$$
It follows $m \cdot a_{r+\varepsilon m -m}(\mC^\perp) \le
\dim(\mA^\perp)=m(a_r(\mC^\perp)-1)$,
i.e., $a_{r+\varepsilon m -m}(\mC^\perp) \le a_r(\mC^\perp)-1$. Since
$\varepsilon >0$, we have
$r+\varepsilon m -m \ge r$. Hence by
Theorem \ref{propr3} we have $a_r(\mC^\perp) \le a_{r+\varepsilon m
-m}(\mC^\perp) \le a_r(\mC^\perp)-1$,
a contradiction. \qedhere
\end{enumerate}
\end{proof}

We now present the main result of this section, which is the analogue of Theorem
\ref{vecchidual}
for Delsarte codes.
 Let $1 \le k \le m$ be integers, and let $\mC \subseteq \mbox{Mat}(k \times
m, \F_q)$ be a Delsarte code of dimension $1 \le t \le km$. For any
$s \in \Z$, we define the \textbf{$s$-weight sets} of $\mC$  by
\begin{eqnarray*}
 W_s(\mC) &:=& \{ a_{s+im}(\mC) : i \in \Z, \ 1 \le s+im \le t \}, \\
 \overline{W}_s(\mC) &:=& \{ k+1-a_{s+im}(\mC) : i \in \Z, \ 1 \le s+im \le t
\}.
\end{eqnarray*}
The following result holds.

\begin{corollary} \label{determinano}
 Let $\mC \subseteq \mbox{Mat}(k \times m, \F_q)$
be a Delsarte code of dimension $1 \le t \le km-1$. For any integer $1 \le p
\le m$ we have $W_p(\mC^\perp)=[k]\setminus \overline{W}_{p+t}(\mC)$. In
particular,
the Delsarte generalized
weights of $\mC$ completely determine the Delsarte generalized weights of
$\mC^\perp$.
\end{corollary}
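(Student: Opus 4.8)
The plan is to combine the non-equality supplied by Theorem \ref{diverso} with an exact cardinality count, the point being that Theorem \ref{diverso} already gives disjointness for free, so all the real work lies in the counting. First I would record that, by Lemma \ref{proprduale}, $\mC^\perp$ is a nonzero Delsarte code of dimension $km-t$, so its weights $a_r(\mC^\perp)$ are defined for $1 \le r \le km-t$. Applying parts $(2)$, $(3)$ and $(6)$ of Theorem \ref{propr3} to both $\mC$ and $\mC^\perp$ shows $1 \le a_r(\mC^\perp) \le k$ and $1 \le a_r(\mC) \le k$ for all relevant indices, so both $W_p(\mC^\perp)$ and $\overline{W}_{p+t}(\mC)$ are subsets of $[k]$. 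The index conditions defining these two sets are exactly conditions $(1)$ and $(2)$ of Theorem \ref{diverso} (with $r=p+im$ and with $p+t+jm$), and the conclusion $a_{p+im}(\mC^\perp) \neq k+1-a_{p+t+jm}(\mC)$ says precisely that no element of $W_p(\mC^\perp)$ equals any element of $\overline{W}_{p+t}(\mC)$. Hence the two sets are disjoint subsets of $[k]$, which immediately gives the inclusion $W_p(\mC^\perp) \subseteq [k]\setminus \overline{W}_{p+t}(\mC)$; it then suffices to prove the numerical identity $|W_p(\mC^\perp)|+|\overline{W}_{p+t}(\mC)|=k$.

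To compute the two cardinalities I would invoke the strict monotonicity of part $(4)$ of Theorem \ref{propr3}. Applied to $\mC^\perp$, it yields $a_{p+im}(\mC^\perp)<a_{p+(i+1)m}(\mC^\perp)$ whenever both indices lie in $[1,km-t]$, so the assignment $p+im \mapsto a_{p+im}(\mC^\perp)$ is injective on valid indices and $|W_p(\mC^\perp)|$ equals $N_1$, the number of integers in $[1,km-t]$ congruent to $p$ modulo $m$. Applied to $\mC$, the same part makes $j \mapsto k+1-a_{p+t+jm}(\mC)$ strictly decreasing on valid indices, so $|\overline{W}_{p+t}(\mC)|$ equals $N_2$, the number of integers in $[1,t]$ congruent to $p+t$ modulo $m$.

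The crux of the argument, and the step I expect to be the main obstacle, is evaluating $N_1+N_2$; I would handle it by a translation. The shift $r \mapsto r+t$ is a bijection of $[1,km-t]$ onto $[t+1,km]$ carrying the residue class $p$ modulo $m$ to the residue class $p+t$ modulo $m$, so $N_1$ equals the number of integers in $[t+1,km]$ congruent to $p+t$ modulo $m$. Adding $N_2$, which counts those in $[1,t]$, gives the number of integers in the whole interval $[1,km]$ congruent to $p+t$ modulo $m$, and this is exactly $k$, since $[1,km]$ contains precisely $k$ members of each residue class. Thus $|W_p(\mC^\perp)|+|\overline{W}_{p+t}(\mC)|=k$; combined with the disjointness and with both sets lying in $[k]$, this forces $W_p(\mC^\perp)=[k]\setminus \overline{W}_{p+t}(\mC)$. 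For the final assertion I would let $p$ range over $[m]$: this covers every weight $a_r(\mC^\perp)$ with $1 \le r \le km-t$, each set $W_p(\mC^\perp)$ is determined by $\overline{W}_{p+t}(\mC)$ and hence by the weights of $\mC$, and listing the elements of $W_p(\mC^\perp)$ in increasing order recovers the individual values $a_{p+im}(\mC^\perp)$ by the strict monotonicity of part $(4)$.
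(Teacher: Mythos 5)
Your proof is correct and follows the same overall strategy as the paper: Theorem \ref{diverso} gives disjointness of $W_p(\mC^\perp)$ and $\overline{W}_{p+t}(\mC)$ inside $[k]$, strict monotonicity (part $(4)$ of Theorem \ref{propr3}) makes the weights within each weight set pairwise distinct so that cardinalities reduce to counting valid indices, and the result then follows from $|W_p(\mC^\perp)|+|\overline{W}_{p+t}(\mC)|=k$. Where you diverge is in the execution of the counting step, which you rightly identify as the crux. The paper first disposes of the two degenerate cases $\overline{W}_{p+t}(\mC)=\emptyset$ and $W_p(\mC^\perp)=\emptyset$ separately, and then, in the generic case, writes the two cardinalities as explicit floor/ceiling expressions, namely $\lfloor (km-t-p)/m \rfloor - \lceil (1-p)/m \rceil+1$ and $\lfloor -p/m \rfloor - \lceil (1-p-t)/m \rceil+1$, and verifies their sum is $k$ by a case analysis on the remainder of $t+p$ modulo $m$. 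Your argument replaces all of this with the single observation that the shift $r\mapsto r+t$ carries the residue class of $p$ in $[1,km-t]$ bijectively onto the residue class of $p+t$ in $[t+1,km]$, so the two index sets together enumerate one full residue class in $[1,km]$, which has exactly $k$ elements. This is cleaner, avoids the floor/ceiling bookkeeping entirely, and handles the empty cases uniformly rather than as separate preliminary arguments; the paper's computation, on the other hand, yields the explicit cardinality formulas as a byproduct. Both recover the individual weights of $\mC^\perp$ at the end in the same way, by ordering each $W_p(\mC^\perp)$.
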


\begin{proof}
By Theorem \ref{diverso} we have $W_p(\mC^\perp) \cap
\overline{W}_{p+t}(\mC) = \emptyset$, and parts (\ref{propr3_1}),
(\ref{propr3_2}) and (\ref{propr3_3}) of Theorem \ref{propr3} imply
$W_p(\mC^\perp) \cup
\overline{W}_{p+t}(\mC) \subseteq [k]$. Hence it suffices to show that
$|W_p(\mC^\perp)| +
|\overline{W}_{p+t}(\mC)|=k$.

By part (\ref{propr3_4}) of Theorem \ref{propr3} the generalized weights 
$a_{p+im}(\mC^\perp)$, for $i \in \Z$ with $1 \le p+im \le km-t$, are
distinct. Therefore we have
\begin{equation} \label{primaf}
 |W_p(\mC^\perp)|= |\{ i \in \Z : \lceil (1-p)/m \rceil \le i \le
\lfloor (km-t-p)/m \rfloor \}|.
\end{equation} 
For the same reason, the
generalized weights
$a_{p+t+im}(\mC)$, for $i \in \Z$ with $1 \le p+t+im \le t$, are
also distinct, and so
\begin{equation} \label{secondaf}
 |\overline{W}_{p+t}(\mC)|= |\{ i \in \Z : \lceil (1-p-t)/m
\rceil \le i \le
\lfloor -p/m \rfloor \}|.
\end{equation}
Since $1 \le p \le m$, we have $\lceil (1-p)/m \rceil=0$ and
$\lfloor -p/m \rfloor=-1$.
Thus equations (\ref{primaf}) and (\ref{secondaf}) can be written as 
$$|W_p(\mC^\perp)|= \lfloor (km-t-p)/m \rfloor+1, \ \ \
\ \ \ |\overline{W}_{p+t}(\mC)|= - \lceil (1-p-t)/m
\rceil.$$
Therefore it suffices to show
\begin{equation} \label{suff}
 \lfloor (km-t-p)/m \rfloor  - \lceil (1-p-t)/m \rceil =k-1.
\end{equation}
Write $t+p=Am+B$ with $0 \le B \le m-1$. If $B=0$ then
$\lfloor (km-t-p)/m \rfloor=k-A$ and $\lceil (1-p-t)/m \rceil = -A+1$. If
$0 <B \le m-1$ then $\lfloor (km-t-p)/m \rfloor=k-A-1$ and $\lceil (1-p-t)/m
\rceil = -A$. This shows identity (\ref{suff}).

To prove the second part of
the statement, observe that by part (\ref{propr3_4}) of Theorem \ref{propr3} the
generalized weights of $\mC^\perp$ in $W_p(\mC^\perp)$ are ordered integers. 
Hence by the first part of the statement they are determined by the
set $\overline{W}_{t+p}(\mC)$. 
The result now follows from the fact that
any $a_r(\mC^\perp)$, $1 \le r \le km-t$, belongs to exactly one set
$W_p(\mC^\perp)$,
for some $1 \le p \le m$.
\end{proof}

\begin{example}
 Let e.g. $q=5$ and $k=m=3$. Let $\mC \subseteq \mbox{Mat}(3 \times 3, \F_5)$ be
the code
 generated over $\F_5$ by the two matrices
 $$\begin{bmatrix}
    1 & 0 & 0 \\ 0 & 0 & 0 \\ 0 & 0 & 0
   \end{bmatrix}, \ \ \ \
   \begin{bmatrix}
    0 & 0 & 0 \\ 0 & 3 & 0 \\ 0 & 0 & 0
   \end{bmatrix}.
$$
We have $a_1(\mC)=1$, $a_2(\mC)=2$, and $\dim_{\F_q}(\mC^\perp)=9-2=7$. We will
compute the integers 
$$a_1(\mC^\perp), \ \ a_2(\mC^\perp), \ \ a_3(\mC^\perp), \ \ a_4(\mC^\perp), \
\ a_5(\mC^\perp), \ \ a_6(\mC^\perp), \ \ a_7(\mC^\perp)$$
employing Corollary \ref{determinano}.
Start with $p=1$. We have
$W_1(\mC^\perp)= \{ a_1(\mC^\perp), a_4(\mC^\perp), a_7(\mC^\perp)\}$
and $\overline{W}_3(\mC)=\emptyset$.
Since $a_1(\mC^\perp) < a_4(\mC^\perp) < a_7(\mC^\perp)$ and
$W_1(\mC^\perp)=[3]\setminus \overline{W}_3(\mC)$, it follows
$a_1(\mC^\perp)=1$, $a_4(\mC^\perp)=2$, $a_7(\mC^\perp)=3$.
Similarly, $W_2(\mC^\perp)= \{ a_2(\mC^\perp), a_5(\mC^\perp)\}$
and $\overline{W}_4(\mC)= \{ 3+1-a_1(\mC)\}= \{ 3 \}$.
It follows $a_2(\mC^\perp)=1$ and $a_5(\mC^\perp)=2$.
Finally, $W_3(\mC^\perp)= \{ a_3(\mC^\perp), a_6(\mC^\perp)\}$ and 
$\overline{W}_5(\mC)= \{ 3+1-a_2(\mC)\} = \{ 2\}$. Hence $a_3(\mC^\perp)=1$ and
$a_6(\mC^\perp)=3$.
Summarizing, the Delsarte
 generalized weights of $\mC^\perp$ are the integers
 $$a_1(\mC^\perp)=1, \ \ a_2(\mC^\perp)=1, \ \ a_3(\mC^\perp)=1, \ \
a_4(\mC^\perp)=2, \ \ a_5(\mC^\perp)=2, \ \ a_6(\mC^\perp)=3, 
 \ \ a_7(\mC^\perp)=3.$$
\end{example}

  Combining Theorem \ref{finer}, Lemma \ref{tett}(\ref{te2}) and \cite{io},
Theorem 21, we see that Corollary \ref{determinano} 
  generalizes the second part of Theorem \ref{vecchidual}.

\begin{remark}
    In \cite{ogg} Oggier and Sboui propose a definition of generalized
rank weights for Gabidulin codes which we now briefly describe.
Let $C \subseteq \F_{q^m}^k$
 be a non-zero Gabidulin code of dimension $1 \le t \le k$. Given an
 integer $1\le r \le t$, the \textbf{$r$-th Oggier-Sboui generalized weight} of
$C$ is 
$m'_r(C):= \min \{ \mbox{maxrk}(D) : D \subseteq C, \ \dim_{\F_{q^m}}(D)=r
\}$.  Ducoat shows in \cite{ducoat} how the Oggier-Sboui generalized weights
relate to  the generalized rank weights proposed by Kurihara, Matsumoto and
Uyematsu in
\cite{kmat}.

One may also define generalized weights for Delsarte codes in analogy with
 the generalized weights for Gabidulin codes proposed by
Oggier
and Sboui as follows.
 Given a Delsarte code $\mC \subseteq \mbox{Mat}(k
 \times m,
 \F_q)$ of dimension $1 \le t \le km$ and an integer $1 \le r \le t$,  define
  $a'_r(\mC):= \{ \mbox{maxrk}(\mD) : \mD \subseteq \mC, \
 \dim_{\F_q}(\mD)=r\}$.
 It can be proved that $a'_r(\mC) \le a_r(\mC)$ for all $r$, and that equality
 does not hold in general.  Let e.g. $q=2$, $k=2$ and $m=3$. Denote by $\mC
\subseteq \mbox{Mat}(2 \times 3,
 \F_2)$ the Delsarte code generated by the three $\F_q$-independent matrices
 $$A:=\begin{bmatrix} 1 & 0 & 0 \\ 0 & 0 & 0 \end{bmatrix}, \ \ \ 
 B:=\begin{bmatrix} 0 & 1 & 0 \\ 0 & 0 & 1 \end{bmatrix}, \ \ \ 
 C:=\begin{bmatrix} 0 & 0 & 0 \\ 1 & 0 & 0 \end{bmatrix}.$$
 The $2$-dimensional subcode $\mD \subseteq \mC$ generated by $A$ and $C$ has 
 $\mbox{maxrk}(\mD)=1$. Hence $a'_2(\mC)=1$.
 On the other hand, it can be checked that there is no Delsarte optimal
anticode
 $\mA
 \in \mA^D_2(2,3)$ with $\dim_{\F_q}(\mA)=3$ and $\dim_{\F_q}(\mA \cap
 \mC) \ge 2$. It follows $a_2(\mC)=6/3=2 \neq a'_2(\mC)$.
 
 Unfortunately, it is not true in general that 
the $a_r'$ generalized weights of a Delsarte code determine 
the $a_r'$ generalized weights of the dual code.
Let e.g. $q=2$, $k=2$ and $m=3$. Consider  the $2$-dimensional Delsarte codes
$\mC,\mD \subseteq \mbox{Mat}(k\times m, \F_2)$ defined by
$$\mC:= \langle \begin{bmatrix} 1 & 0 & 0 \\ 0 & 0 & 0  \end{bmatrix}, 
\begin{bmatrix} 0 & 0 & 0 \\ 1 & 0 & 0  \end{bmatrix}
\rangle, \ \ \ \ \  \mD:= \langle \begin{bmatrix} 1 & 0 & 0 \\ 0 & 0 & 0 
\end{bmatrix}, 
\begin{bmatrix} 0 & 1 & 0 \\ 0 & 0 & 0  \end{bmatrix}
\rangle.$$
One can  check that $a'_1(\mC)=a'_1(\mD)=1$ and  $a'_2(\mC)=a'_2(\mD)=1$.
On the other hand, we have
\begin{gather*}
 a'_1(\mC^\perp) = 1,\ \ a'_2(\mC^\perp) = 1, \ \ a'_3(\mC^\perp) = 2, \ \
a'_4(\mC^\perp) = 2, \\
 a'_1(\mD^\perp) = 1,\ \
a'_2(\mD^\perp) = 1, \ \ a'_3(\mD^\perp) = 1, \ \
a'_4(\mD^\perp) = 2.
\end{gather*}
Thus $\mC$ and $\mD$ have the same $a_r'$ generalized weights, while
$\mC^\perp$ and $\mD^\perp$ have not. 
Therefore we do not have an analogue
of Corollary \ref{determinano} for the $a_r'$ generalized weights.
\end{remark}

\section{Generalized rank weights and security drops} \label{sss7}
 
In \cite{metrics3} Silva and Kschischang propose a rank-metric coding scheme
to secure a network communication against an eavesdropper. In this paper
we are more interested in the algebraic aspects of the problem, and we do not
describe the scheme. 
In \cite{metrics3}
the authors prove that when a Gabidulin code $C \subseteq \F_{q^m}^k$ is employed in 
their scheme, the information that an eavesdropper can obtain listening at $0 \le \mu \le k$ links
of the channel is bounded by the quantity
$$\Delta_\mu(C) := \max \{ \dim_{\F_q}(V \cap C)  :  V \in \Lambda_q(k,m), \
\dim_{\F_q}(V)=\mu\}.$$
Clearly, $\Delta_\mu(C) \ge \Delta_{\mu -1}(C)$ for any Gabidulin
code
$C$ and any integer $1 \le \mu \le k$. In analogy with the theory of
generalized Hamming weights of \cite{wei}, 
we propose the following definition.

\begin{definition}
 Let $C \subseteq \F_{q^m}^k$ be a Gabidulin code. 
 An integer $1 \le \mu \le k$ is  a \textbf{worst-case security drop} for  $C$ if
 $\Delta_\mu(C) > \Delta_{\mu -1}(C)$.
\end{definition}

The following result is the analogue for Gabidulin code of \cite{wei}, Corollary
A. It shows that the generalized rank weights introduced by
Kurihara,  Matsumoto and Uyematsu
 in \cite{kmat} measure the worst-case security drops of a Gabidulin code
  employed in the scheme of \cite{metrics3}. 

\begin{theorem} \label{drrr}
 Let $C \subseteq \F_{q^m}^k$ be a Gabidulin code
 of dimension $1 \le t \le k$ over $\F_q$. 
 Fix an integer $1 \le \mu \le k$. The following are equivalent.
 \begin{enumerate}
  \item $\Delta_{\mu}(C)>\Delta_{\mu-1}(C)$, i.e., $\mu$ is a worst-case
security drop for $C$,
  \item there exists $1 \le r \le t$ with $m_r(C)=\mu$.
 \end{enumerate}
\end{theorem}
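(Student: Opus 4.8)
The plan is to study the integer-valued function $\mu \mapsto \Delta_\mu(C)$ and to show that its set of jump points is exactly $\{m_1(C),\dots,m_t(C)\}$. The single structural input I need is that a Frobenius-closed space can be enlarged by one dimension while remaining Frobenius-closed, and this follows immediately from Theorem \ref{defsufq}: if $V \in \Lambda_q(k,m)$ has dimension $d<k$, a basis of $V$ defined over $\F_q$ spans an $\F_q$-subspace $W\subseteq\F_q^k$ of dimension $d$; choosing $w\in\F_q^k\setminus W$ and setting $V':=\mbox{Span}_{\F_{q^m}}(W\cup\{w\})$ gives $V'\in\Lambda_q(k,m)$ with $V\subseteq V'$ and $\dim_{\F_{q^m}}(V')=d+1$. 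Iterating, any Frobenius-closed space of dimension $\le\mu-1\le k-1$ embeds in one of dimension exactly $\mu-1$, and since $V\subseteq V'$ forces $V\cap C\subseteq V'\cap C$, enlarging $V$ can only increase $\dim(V\cap C)$. Since a jump of $\Delta$ is unaffected by the constant factor $m$ relating $\dim_{\F_q}$ and $\dim_{\F_{q^m}}$ of an $\F_{q^m}$-space, I may and will compute throughout with $\dim_{\F_{q^m}}$, writing $\Delta_\mu(C)=\max\{\dim_{\F_{q^m}}(V\cap C): V\in\Lambda_q(k,m),\ \dim_{\F_{q^m}}(V)=\mu\}$.

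For $(2)\Rightarrow(1)$ I would argue as follows. Assuming $m_r(C)=\mu$ for some $1\le r\le t$, Definition \ref{rgw} furnishes $V\in\Lambda_q(k,m)$ with $\dim_{\F_{q^m}}(V)=\mu$ and $\dim_{\F_{q^m}}(V\cap C)\ge r$, so $\Delta_\mu(C)\ge r$. I would then rule out $\Delta_{\mu-1}(C)\ge r$: such an inequality would produce a Frobenius-closed $V'$ of dimension $\mu-1$ with $\dim_{\F_{q^m}}(V'\cap C)\ge r$, hence $m_r(C)\le\mu-1$, contradicting $m_r(C)=\mu$. Therefore $\Delta_{\mu-1}(C)\le r-1<r\le\Delta_\mu(C)$, and $\mu$ is a worst-case security drop.

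For the converse $(1)\Rightarrow(2)$ I would start from $\Delta_\mu(C)>\Delta_{\mu-1}(C)$ and pick a maximizing $V\in\Lambda_q(k,m)$ with $\dim_{\F_{q^m}}(V)=\mu$, setting $r:=\Delta_\mu(C)=\dim_{\F_{q^m}}(V\cap C)$. This witness gives $m_r(C)\le\mu$ at once. To get equality, I would suppose $m_r(C)\le\mu-1$: then a Frobenius-closed space of dimension $\le\mu-1$ meets $C$ in dimension $\ge r$, and enlarging it to dimension exactly $\mu-1$ via the extension lemma keeps the intersection $\ge r$, forcing $\Delta_{\mu-1}(C)\ge r=\Delta_\mu(C)$, a contradiction; hence $m_r(C)=\mu$. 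Finally I would verify $1\le r\le t$ using the boundary values $\Delta_0(C)=0$ (only $V=0$) and $\Delta_k(C)=\dim_{\F_{q^m}}(\F_{q^m}^k\cap C)=t$ (the unique Frobenius-closed space of full dimension is $\F_{q^m}^k$), together with $0=\Delta_0(C)\le\Delta_{\mu-1}(C)<\Delta_\mu(C)\le\Delta_k(C)=t$.

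The whole argument is short once the extension lemma is available, and that lemma — essentially the equivalence of \emph{Frobenius-closed} with \emph{defined over $\F_q$} from Theorem \ref{defsufq} — is the only step where the rank-metric structure enters; the rest is monotonicity bookkeeping. The one point to watch, and the mild obstacle, is the $\F_q$ versus $\F_{q^m}$ normalization in the definition of $\Delta_\mu(C)$, which I dispose of by computing uniformly over $\F_{q^m}$. I note an equivalent, slightly more quantitative route: proving $\Delta_\mu(C)\ge r \iff m_r(C)\le\mu$ and deducing $\Delta_\mu(C)=\#\{r\in[t]:m_r(C)\le\mu\}$, where the strict monotonicity $m_1(C)<\cdots<m_t(C)$ of Theorem \ref{propr2} guarantees that $\{r:m_r(C)\le\mu\}$ is an initial segment, so that $\Delta$ jumps at $\mu$ exactly when some $m_r(C)=\mu$.
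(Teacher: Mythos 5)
Your proof is correct and follows essentially the same route as the paper's: in each direction you take a witnessing Frobenius-closed space and derive a contradiction by enlarging a smaller Frobenius-closed space to dimension exactly $\mu-1$ without shrinking its intersection with $C$. The only real difference is cosmetic — you make explicit (via Theorem \ref{defsufq}) the extension step that the paper dismisses with ``clearly,'' and in $(2)\Rightarrow(1)$ you contradict $m_r(C)\le \mu-1$ directly rather than invoking monotonicity of the $m_r$'s — so no further comparison is needed.
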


\begin{proof}
 Let us prove $(1) \Rightarrow (2)$. Take $V \in \Lambda_q(k,m)$ with
$\dim_{\F_{q^m}}(V)=\mu$ and
 $\dim_{\F_{q^m}}(V \cap C)=\Delta_\mu(C)$. We have $m_{\Delta_{\mu}(C)}(C) \le
\mu$.
 Assume by contradiction $m_{\Delta_{\mu}(C)}(C) < \mu$. By definition, there
exists $U \in \Lambda_q(k,m)$ with 
 $\dim_{\F_{q^m}}(U \cap C) \ge \Delta_\mu(C)$ and $\dim_{\F_{q^m}}(U)<\mu$.
Clearly, we can find 
 $H \supseteq U$ with $H \in \Lambda_q(k,m)$ and $\dim_{\F_{q^m}}(H)=\mu-1$.
 It follows
 $$\Delta_{\mu-1}(C) \ge \dim_{\F_{q^m}}(H \cap C) \ge \dim_{\F_{q^m}}(U \cap C)
\ge \Delta_\mu(C),$$
 a contradiction. Hence we may take $r=\Delta_{\mu}(C)$.
 Now we prove $(2) \Rightarrow (1)$. Let $1 \le r \le t$ with $m_r(C)=\mu$.
 There exists $V \in \Lambda_q(k,m)$ with $\dim_{\F_{q^m}}(V \cap C) \ge r$
 and $\dim_{\F_{q^m}}(V)=\mu$. Hence $\Delta_\mu(C) \ge r$.
 Assume by contradiction $\Delta_\mu(C)=\Delta_{\mu-1}(C)$. Let $U \in
\Lambda_q(k,m)$
with $\dim_{\F_{q^m}}(U)=\mu-1$ and $\dim_{\F_{q^m}}(U \cap C)
=\Delta_{\mu-1}(C)=\Delta_{\mu}(C)$.
By definition, $m_{\Delta_\mu(C)}(C) \le \mu-1$. Moreover, since $\Delta_\mu(C)
\ge r$,
by Theorem \ref{propr2} we have $m_{\Delta_\mu(C)}(C) \ge m_r(C)$. It follows
$\mu=m_r(C) \le m_{\Delta_\mu(C)}(C) \le \mu-1$, a contradiction. This proves
$\Delta_\mu(C) > \Delta_{\mu-1}(C)$.
\end{proof}

\section*{Acknowledgement}
I am grateful to Elisa Gorla for help in improving Theorem \ref{finer}.

\end{document}